\numberwithin{equation}{section}
\theoremstyle{plain}
\newtheorem{theorem}{Theorem}[section]
\newtheorem{proposition}[theorem]{Proposition}
\newtheorem{lemma}[theorem]{Lemma}
\newtheorem{corollary}[theorem]{Corollary}
\newtheorem{remark}[theorem]{Remark} 
\newtheorem{definition}[theorem]{Definition}
\newtheorem{hypothesis}[theorem]{Assumption}
\renewcommand{\P}{\mathbb{P}}
\newcommand{\E}{\mathbb{E}}
\newcommand{\R}{\mathbb{R}}
\newcommand{\F}{\mathcal{F}}
\newcommand{\FF}{\mathbb{F}}
\title{The Self-Financing Equation in High Frequency Markets}
\date{\today}
\author{Ren\'e Carmona}
\author{Kevin Webster}
\begin{document}

\begin{abstract}
High Frequency Trading (HFT) represents an ever growing proportion of all financial transactions as most markets have now switched to electronic order book systems. The main goal of the paper is to propose continuous time equations which generalize the self-financing relationships of frictionless markets to electronic markets with limit order books. We use NASDAQ ITCH data to identify significant empirical features such as price impact and recovery, rough paths of inventories and vanishing bid-ask spreads. Starting from these features, we identify microscopic identities holding on the trade clock, and through a diffusion limit argument, derive continuous time equations which provide a macroscopic description of properties of the order book.
These equations naturally differentiate between trading via limit and market orders. We give several applications (including hedging European options with limit orders, market maker optimal spread choice, and toxicity indexes) to illustrate their impact and how they can be used to the benefit of Low Frequency Traders (LFTs).
\end{abstract}

\maketitle

\section{\textbf{Introduction}}
In a series of papers (\cite{O'Hara3, O'Hara2, O'Hara}) on the divide between high and low frequency traders, M. O'Hara and co-authors identified a number of market features that both Low Frequency Traders (LFTs for short) and most academic researchers have largely ignored, but that High Frequency Traders (HFTs from now on)  exploit with great success. 

\begin{quote}
``There is no question that the goal of many HFT strategies is to profit from LFTs mistakes. [...] Part of HFTs success is due to the reluctance of LFT to adopt (or even to recognize) their paradigm.''(\cite{O'Hara})
\end{quote}

These papers also outline a program to better understand and possibly remedy these issues: in a nutshell, these authors recommend that LFTs update the strategies and models they use in order to incorporate more of the features of the high frequency markets. While the goal should not be to try to \emph{beat} the HFTs at their own game by modeling the high frequency market microstructure in painstaking detail, it should be to \emph{capture}, at least sparsely, the macroscopic effects of those phenomena that actually affect LFT.

This paper is in line with this  program. Case in point, its main thrust is to provide forms of the \emph{self-financing portfolio equation}, both in discrete and continuous time, consistent with the high frequency paradigm. The  equations we propose are motivated by and fitted to high frequency data. They are  derived theoretically from accounting rules at the high frequency level. Their continuous time limits capture the relevant effects at the macroscopic level. From these fundamental relationships, we use the powerful tools of stochastic calculus to revisit the solutions of a certain number of standard continuous time financial problems in light of the new high frequency paradigm. We show how the latter affects for example option hedging and we highlight the different solution depending upon trading being through limit orders versus market orders. A model for market making in the spirit of \cite{Stoikov} is solved. We also introduce, still in the same framework, an instantaneous and a cumulative toxicity indexes in the spirit of \cite{O'Hara}.

\vskip 4pt
The crucial insight of \cite{O'Hara}, named 'the new paradigm', is the fact that high frequency traders do not operate on the 'calendar' clock, but instead use some form of 'event-based time', such as the trade clock, or the volume clock. This is partly due to the algorithmic nature of their strategies and the lack of direct calendar clock dependent constraints such as maturities and the likes. A fringe benefit for quantitative analysis is the well documented fact that prices \emph{behave better} under an event-based clock than the calendar clock. A number of papers \cite{Ane, Clark, Clark2, Mandelbrot2, Mandelbrot, O'Hara} argue that, in addition to removing seasonal effects and resolving asynchronicity issues, this time-change makes the price returns more Gaussian-like. Even though this property is mostly irrelevant in our analysis, we choose to work in the \emph{trade clock} in which each discrete time step corresponds to one trade. Indeed, even though our conclusions are independent of the clock used, we find the trade clock especially convenient to formulate and test the significance of our findings. With these proviso out of the way, we can outline our research agenda:

\begin{enumerate}
\item Understand, at the microscopic level, structural relationships and strategies that HFTs exploit;\label{enum_micro}
\item Identify which features persist at the macroscopic level, in which form, and provide continuous time models on that scale;\label{enum_macro}
\item Use these models to update LFT strategies and provide monitoring tools: transaction cost analysis, measure of toxicity of order flow, $\ldots$\label{enum_appli}
\end{enumerate}
For the sake of definiteness, we focus on the self-financing portfolio equation of continuous time finance. To this effect, we review in Section \ref{se:sfe} the role of this condition in quantitative finance, and in so doing,  introduce the continuous time analysis notation used in the paper, as well as the exact form of our generalization.

\vskip 2pt
The main originality of the form of the  self financing condition which we propose to use, is the fact that it accounts for both price impact and price recovery, two important empirical microstructure features that are usually ignored or modeled in separate ad-hoc fashions. It also differentiates between the impacts of limit and market orders. This is important because nowadays, a large number of agents  trade with both types of orders, rather than simply relying on market makers to find trades. Furthermore, our generalization of the self-financing portfolio equation can be used with a larger class of inventories models, e.g. \emph{with infinite variation}. This allows the use of the powerful tools of stochastic calculus to retain tractability in a number of models. 

\vskip 6pt
The classical self-financing portfolio equation was generalized in two separate directions in the financial engineering literature.
On one hand, Almgren and Chriss proposed in \cite{Almgren} a way to incorporate price impact and temporary transaction costs in a phenomenological model for optimal execution with market orders and finite speed of trading. On the other hand, and with a completely different point of view, extensions of the classical self-financing equation of the Black-Scholes theory were touted by researchers attempting to include transaction costs in Merton's optimal portfolio's theory. See for example  \cite{ Chellathurai, Magill, Shreve} or the recent review \cite{MuhleKarbe}.

\vskip 12pt
Two books, \emph{Empirical market microstructure} by J. Hasbrouck (\cite{Hasbrouck}) and \emph{Market microstructure theory} by M. O'Hara (\cite{O'Hara4}) cover the state of the field prior to the advent of HFT. They contain informed trader models (\cite{Kyle}) and inventory-based market making (\cite{Amihud, Garman, Stoll, O'Hara5}). \emph{Three} main themes united different market structures at that time: the limit order book, adverse selection (the underlying cause of price impact) and statistical predictions. These themes are just as relevant, if not  more so in the new age of high frequency trading.

\vskip 2pt
Our investigations were inspired by a large number of empirical studies of high frequency data (see for example \cite{Biais, Bouchaud, Bouchaud3, Chakraborti, Maslov2, Weber, Bouchaud2, Farmer}), and recent publications of theoretical models of the limit order book (\cite{Cont3, Cont2, Cont, Maslov, Bouchaud2}). However, our emphasis is different as we use limit orders as a \emph{starting point}. Our goal is not to \emph{explain} the evolution of the order book, but merely to analyze the \emph{consequences} of the choices made by the liquidity providers and takers on price changes, their inventories and their wealth.

\vskip 6pt
We close this introduction with a short overview of the paper. Since so much of our motivation and results depend upon the self financing condition, we devote next section to a review of the role of this condition in continuous time quantitative finance, with the goal of introducing the notation used in the paper, as well as announcing the exact form of our generalization.
The remainder of the paper is structured into two parts. In the first part, we consider limit order books on which the trades take place at the best bid and best ask only. While seemingly restrictive, this assumption can be justified by looking closely at the data. Indeed, once two specific classes of executions are removed from the data \footnote{We removed two specific classes of trades: 1) executions classified by NASDAQ as \emph{type 'C'}. While we were not able to figure out what these special deals are, their numbers are very small, and on any given day, for any given stock, these executions represent less than $1\%$ of the trades; 2) executions of hidden orders. While in very small numbers, if at all present, for small cap stocks, these trades are frequently very significant for large cap stocks. For example, on many days, the proportion of executions of hidden orders can be as large as $35$ to $40\%$ of the trades for stocks like Apple or Google.
Moreover, no information is provided as to whether the execution is for a fully hidden order, or a the tip of an iceberg order. So, we decided to remove these executions for the purpose of this first empirical study of the self-financing condition from the order book.},
this assumption holds true in all the experiments reported in this paper. In the second part of the paper, we refrain from pre-processing the data in this way and we consider the case of a general order book. For the sake of completeness we derive the self-financing equations for a general order book shape. This generalization is needed for markets where a significant amount of trades happen outside the bid-ask spread. 
As expected, this part of the paper is more involved mathematically.

We first derive discrete versions of our self financing equation and of the price impact constraint  from  NASDAQ limit order book data. Our empirical studies are done in the trade clock, and we demonstrate the significance of our microscopic analysis by rigorous statistical tests.
Next we take the limit as the tick size goes to zero, and obtain diffusion limits for both price and trade volumes. This leads to our proposed macroscopic continuous-time self-financing condition.

We propose several applications of these macroscopic equations. We first revisit local volatility models for European options in our framework and obtain hedging strategies via limit or market orders. As a highlight, we show that limit orders can only hedge negative convexity options while market orders can hedge positive convexity options. This is a rare example where the theory naturally distinguishes between the roles of liquidity providers and liquidity takers. Then a model for high frequency market making is presented to uncover the relationship between optimal spread setting and price volatility. Finally, we propose two forms of toxicity of market order flow in our continuous time setting, and for the sake of illustration, we compute their empirical analogues on the pool of 120 stocks used in a recent ECB study of HFT.
Following our theoretical analysis of general order book shapes, we propose for illustrative purposes, a supply and demand model based on perfect fill rates and deterministic price recovery.

\section{\textbf{The self-financing equation}}
\label{se:sfe}
In quantitative finance, the standard self financing portfolio equation is a cornerstone of the theory of frictionless markets. It plays a crucial role in many fundamental results, e.g. Merton's portfolio theory. Mathematically, speaking it is a simple equation which \emph{constrains} the wealth process of an investor to live in a certain sub-space. This sub-space is therefore often called the space of \emph{admissible} portfolios. 
New-comers to the mathematical theories of financial market often gripe with the self-financing condition and how it relates to the real world. While it can be postulated as a mathematical definition, it can also be \emph{derived} from a limiting procedure starting from accurate descriptions of the microstructure of trades in the trade clock. This approach is at the core of our strategy.

\begin{quote}
``The sad fact is that the self-financing condition is considerably more subtle in continuous time than it is in discrete time.''\footnote{J. Michael Steele, \emph{Stochastic Calculus and Financial Applications}, section 14.5 'Self-financing and self-doubt'.}
\end{quote}

When discussing market models at the macroscopic level, we assume that the mid-price $p$ and the inventory $L$ are given by It\^o processes:
\begin{equation}
	\begin{cases}
		dp_t &= \mu_t dt + \sigma_t dW_t \\
		dL_t &= b_t dt + l_t dW'_t
	\end{cases}
\end{equation}
for two Wiener processes  $W$ and $W'$ with unspecified correlation structure. We shall also consider an adapted process  $s_t$ representing (in the continuous time limit) the bid-ask spread measured \emph{in tick size}.
The standard self-financing condition of continuous time finance can be stated as a constraint:
\begin{equation}
\label{fo:csf}
dX_t = L_t dp_t
\end{equation}
between the price $p$ of the underlying interest, the inventory $L$, and the wealth $X$ of the agent.
In most classical financial applications, case in point Merton's portfolio theory, the price $p$ is exogenously given,  the inventory $L$ is the agent's input, and his wealth $X$ appears as the output of equation \eqref{fo:csf}. 

The objective of this paper is to generalize the self-financing portfolio condition \eqref{fo:csf} to incorporate known idiosyncrasies of the high frequency markets including transaction costs, price impact and price recovery. Also, we want this generalization to be able to quantify the differences between trading via limit orders and market orders.
We warn the reader that the equations proposed in this paper are only \emph{necessary} conditions and that   quantifying limit order fill rates, priorities and  price recovery are beyond the immediate scope of the present paper.

\subsection{Our basic formula}
The empirical analysis of NASDAQ order book data given in Section \ref{se:empirical} and in the Appendix, together with the diffusion limit arguments of Section \ref{sec:bid-ask}, prompt us to formulate the self-financing condition in the following form:
\begin{equation}
\label{eq:Intro_Formula}
dX_t = L_t dp_t \pm \frac{s_t l_t}{\sqrt{2 \pi}} dt +d[L,p]_t 
\end{equation}
where $\pm$ is $+$ when trading with limit orders, and $-$ when trading with market orders. 
Indeed, we show in Section \ref{se:empirical} below that, when time is measured in the trade clock, the discrete time analog of formula \eqref{eq:Intro_Formula} can be derived rigorously from a specific limit order book feature, and matches real wealth data extremely accurately.
We shall also impose the constraint 
\begin{equation}
\label{fo:corr_constraint}
d[L,p] < 0
\end{equation}
whenever trading with limit orders. Again, this \emph{adverse selection} constraint is also dictated by the empirical analysis of the NASDAQ data. 
\vskip 2pt
We now explain how our condition \eqref{eq:Intro_Formula} and the adverse selection constraint \eqref{fo:corr_constraint} relate to the conditions used in the separate sets of works reviewed in the introduction.

\subsection{The Almgren-Chriss model}
The seminal work by Almgren and Chriss \cite{Almgren} addresses a question  closely related to ours. These authors propose a \emph{ macroscopic model} for the price impact and the change of wealth after a liquidity taker's decision. The model leads to a very tractable framework which was used by many optimal execution studies (see \cite{Alfonsi, Wang} for example). This framework can be summarized by the system:
\begin{equation}
	\begin{cases}
		dp_t &= f_t(l_t) dt + \sigma_t dW_t \\
		dL_t &= l_t dt \\
		dX_t &= L_t dp_t - c_t(l_t)dt
	\end{cases}
\end{equation}
where $f$ and $c$ are two function-valued adapted processes which are positive, and in the case of $c$, convex.

The main advantage of this model is that price impact appears in a tractable fashion. Indeed, it comes through the function $f_t$, which creates a positive `correlation' between traded volumes and the price process. However, it constrains $L$ to be \emph{differentiable} and for this reason, the model parameters cannot be calibrated to market data directly, making the model difficult to test empirically. 
As the empirical analysis of NASDAQ data reported in Section \ref{se:empirical} and the appendix shows, there is ample evidence supporting 
nondifferentiable inventories.
Moreover, limit orders are not part of the discussion in the Almgren-Chriss framework.

\subsection{Transaction cost literature}
The branch of classical mathematical finance most related to our paper is portfolio selection under transaction costs (\cite{ Chellathurai, Magill, Shreve} or the recent review \cite{MuhleKarbe}). Most of these works start from  \emph{a model for the wealth of a liquidity taker} which generalizes the self-financing equation to a setting with transaction costs. In general however, these papers do not emphasize the derivation of the model, but instead, the study of its consequences. We hope to appeal to this side of the community by providing more accurate equations for self-financing portfolios while keeping similar tractability, leading the way to problems related to \emph{liquidity provision}, such as market making. An interesting feature of such problems is that the agent does \emph{not} directly control his portfolio, adding an additional modeling challenge. For the record we note that the standard equation used in this branch of the literature is
\begin{equation}
\label{eq:Intro_standard}
dX_t = L_t dp_t - \frac{s_t}{2} |dL|_t 
\end{equation}
where again, the inventory process $L$ is assumed to have finite variation $\int_0^t |dL|_s <\infty$ for all finite $t$ and $s_t$ is the bid-ask spread.

Strengths of this model are its simplicity, relative tractability, and straightforward calibration to the market.
Its weaknesses include the fact that the process $L$ can only have finite variation. Moreover, price impact, limit orders and other microstructure considerations are absent in the model.

Formula \eqref{eq:Intro_standard} is much closer to our proposed equation \eqref{eq:Intro_Formula} than it may seem at first. It merely corresponds to a different diffusion limit. It can be recovered in our framework by considering \emph{non-vanishing} bid-ask spread, \emph{zero} price impact and looking at market orders only. Notice that these assumptions may be more natural than ours for low frequency markets. This is presumably the reason for their introduction.

\section{\textbf{Empirical study and discrete equations}}
\label{se:empirical}

We first recall standard terminology from the high frequency markets.

\subsection{High frequency terminology}

Trading on high frequency markets takes place on an object called \emph{the limit order book}. An agent can interact with others via two possible trading mechanisms: limit orders and market orders. Limit orders correspond to the act of \emph{providing liquidity} to the market, while market orders \emph{take liquidity} from it. We will refer to agents who engage in the first type of trade as \emph{liquidity providers}\footnote{Of which \emph{market makers} are a special class.} while traders who trade with market orders will be referred to as \emph{liquidity takers}. In real markets, traders often switch between liquidity providing and taking strategies, blurring this definition somewhat.
The following comments can help highlight the differences.
\begin{itemize}
\item A liquidity taker pays a fee for his aggressiveness. This fee typically takes the form of the bid-spread, which is where most trades happen. The corresponding provider captures this bid-ask spread.
\item Right after the trade happens, the price may move. If it does, it almost always moves \emph{in favor} of the market order, compensating to some degree the transaction costs. This phenomena is called \emph{price impact}. It is a consequence of the \emph{adverse selection} of limit orders by takers.
\item Between two successive trades, the price reverts to some value in between the impacted price and the original one. \emph{Price recovery} is an intuitive name often used to describe this high frequency feature.
\item Takers control their inventory directly. Attaining \emph{correlation} with the market requires high frequency \emph{predictions} of the next price move.
\item Providers do not directly control their inventory, but only their exposure to the flow of market orders. How much of the flow they are able to capture depends on their limit order \emph{fill rate}. Flow is considered \emph{toxic} if it leads to adverse selection. The profitability of a provider's strategy depends on the spread he captures and the toxicity of his flow.
\end{itemize}

\subsection{Data used in the Study}
The statistical tests reported in this paper were produced by the analysis of the NASDAQ ITCH data of, amongst other stocks, the pool of 120 stocks used in the recent ECB study \cite{ECB} of high frequency trading.
The figures included in this paper were produced using the data for Coca Cola (KO) on 18/04/13 . As explained in an earlier footnote, the only cleaning pre-processing of the raw data was to remove the special deals and the executions of hidden orders. 

The data do not contain the identity of the agents involved in the transactions. For that reason, all quantities relating to the inventory $L$, cash $K$ or wealth $X$ are aggregate quantities which could be thought of as relating to a \emph{representative aggregate liquidity provider}. The mid-price will be denoted by $p$ and the bid-ask spread by $s$. The time stamps of the transactions are  measured in fractions of microseconds and given in the \emph{calendar} clock. However, the data analysis is performed in the \emph{trade clock} $n=1,...N$ where each time step corresponds to one trade time. For example, $p_n = p_{t_n} = p_{t_n-}$ where $t_n$ is the $n$-th trading time in the calendar clock gives the mid-price just before the $n$-th transaction.
Limit order data happening between two trade times is the source of the changes in the best bid and best ask, (and consequently of the mid-price) and is discarded for the purpose of our analysis. More generally, if $Y$ is a discrete process, we denote by $\Delta_n Y$ the forward-looking increment $\Delta_n Y = Y_{n+1} - Y_n$.

\subsubsection{More Notation}
We denote by $s_n$ the bid-ask \textbf{spread} just before the $n$-th trade. 	In other words, $s_n$ is the difference between the best ask and the best bid, just before the $n$-th trade. We shall argue later on that the spread is of the same order of magnitude as the change in price, namely that $s_n \approx |\Delta_n p|$.
\begin{figure}[htbp]
\centerline{
\includegraphics[width=5cm]{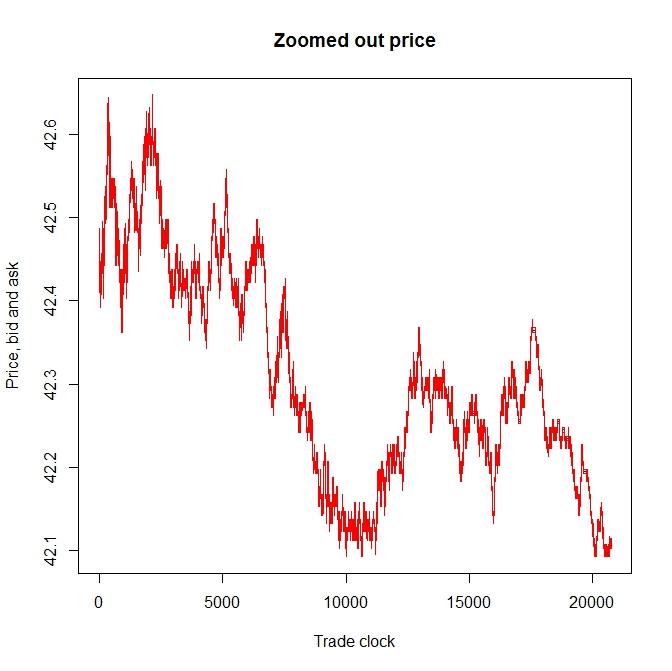}
\includegraphics[width=5cm]{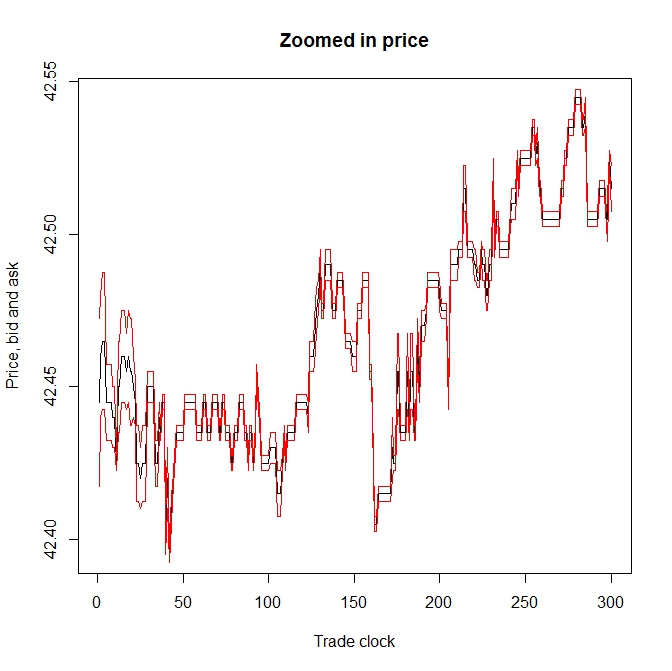}
}
\caption{Plots of the best bid, best ask and mid-price as functions of trade time (left). Zoom into a part of the graph to see the differences between the three plots (right).}
\label{fi:zoom}
 \end{figure}
We also denote by $L_n$ and $K_n$ the inventory and the cash held by the aggregate liquidity provider just before the $n$-th trade. These quantities are not given explicitly with the data provided by NASDAQ, but starting from $L_0=K_0=0$, they can easily be computed after each trade. Indeed, $L_n$ is the cumulative sum up to time $n$ of the algebraic volumes of the trades (positive volume for a limit order executed against a sell market order, and negative volume for an execution against a buy market order). Similarly, $K_n$ is the cumulative sum up to time $n$ of the cash exchanged during the trades.
The inventory and the cash $L_n$ and $K_n$ held by the aggregate liquidity provider are plotted in Figure \ref{fig:Empirical_Summary} against the trade time $n$.
\begin{figure}[htbp]
	\centering
		\includegraphics[width=1.00\textwidth]{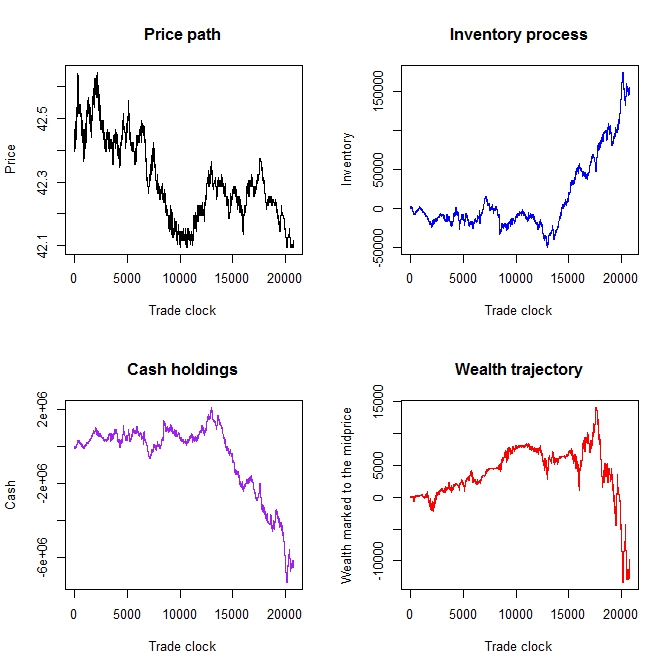}
	\caption{Coca Cola (KO) stock on 18/04/13. Inventory, cash and wealth are those of the aggregate liquidity provider.}
	\label{fig:Empirical_Summary}
\end{figure}

\subsection{Price impact}

Empirically, price impact is the simple fact that the price moves after each trade, and tends to move in favor of the market order. There have been several empirical studies and multiple proposed measures and models for it (\cite{Alfonsi, Almgren, Bouchaud, Bouchaud3, Chakraborti, Maslov2, Wang, Weber, Bouchaud2}). The main economic interpretation for price impact is adverse selection.
In this study, we isolate, measure and model price impact by a straightforward relationship.
\begin{equation}
\label{eq:Empirical_Price_Impact}
\Delta_n L \Delta_n p  \le 0. 
\end{equation} 
For all $n = 1, ... (N-1)$. This relationship states that the price cannot move up when the liquidity provider has bought and cannot move down when the provider has sold. From the taker's perspective, this means that the price always moves in his favor right after a trade.

We provide rigorous statistical tests of \eqref{eq:Empirical_Price_Impact} in Appendix \ref{se:tests}. For the sake of illustration, we note that for Coca Cola on April 18, 2013, \eqref{eq:Empirical_Price_Impact} holds for all but $166$ of the $20742$ trades of our streamlined data set, which represents $0.9\%$ of the trades. This trade impact relationship has clear consequences for the continuous time analogs of the discrete model considered here: \emph{the quadratic covariation} between the provider's inventory and the price process is \emph{negative} and \emph{decreasing}. Conversely, the quadratic covariation between the inventory of a liquidity taker and the price process is positive and increasing.

\begin{figure}[htbp]
	\centering
		\includegraphics[width=0.8\textwidth]{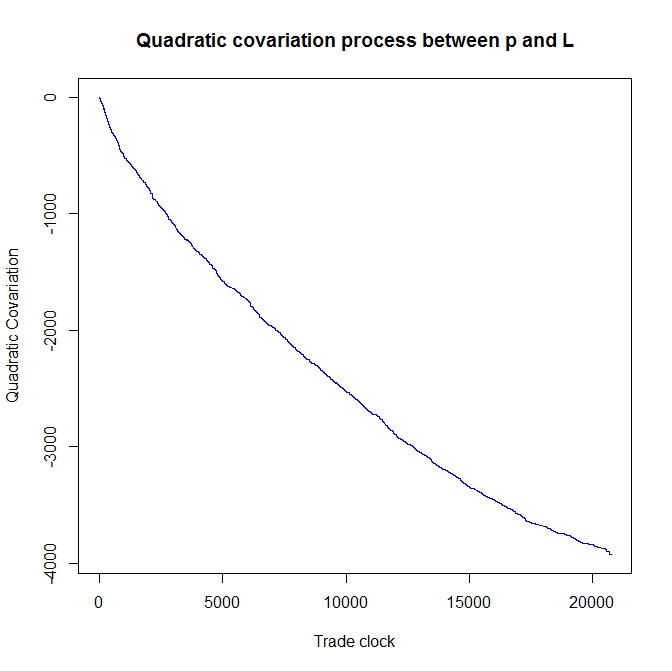}
	\caption{Quadratic covariation between inventory and price path.}
	\label{fig:quad_covar}
\end{figure}
Price impact will give us an extra compelling reason to accept trade volumes with infinite variation. Indeed, when using continuous time models, if the price path and the inventory have a non-negligible quadratic covariation, then we cannot model one as a diffusion process and the other as a finite variation process.

\begin{remark}
The causality of price impact is unclear: do trades cause price movements, or simply predict them? While not crucial for the mathematical theory, it is important for interpretation purposes, and we choose to use the second option. In particular, we shall say that a liquidity taker whose changes in inventory are \emph{strongly correlated} with the price movements has a very good \emph{short term prediction of the price}. This typically is the case of sophisticated high-frequency traders. Low-frequency traders, on the other hand, trade more slowly and acquire inventories which aren't directly correlated with the smaller price movements.
\end{remark}

\subsection{Price recovery}
This is another simple observation. Trades move prices, but typically move them \emph{at most} by one bid-ask spread. If they systematically moved the price by one bid-ask spread, then the correlation between the price path and the taker inventory should be one. Otherwise, it is smaller than one and we say that the \emph{price has recovered} from the price impact. Note that, of all our relationships, this is statistically the weakest one: it is not verified for $5\%$ of the Coca Cola data.

\begin{figure}[htbp]
	\centering
		\includegraphics[width=0.7\textwidth]{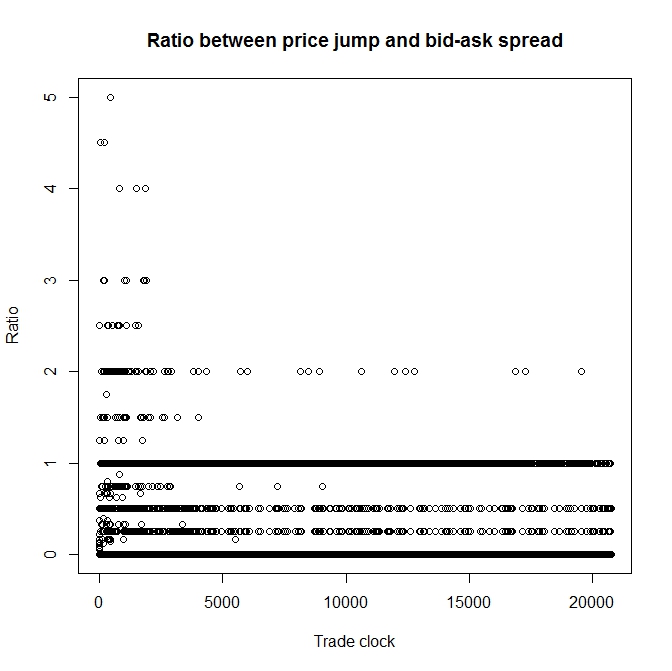}
	\caption{Relationship between price increments and spread.}
	\label{fig:ratio}
\end{figure}
Mathematically, this implies that $|\Delta_n p| \le s_n$ for $n=1, ...(N-1)$. In the continuous time version considered later, it will provide in the diffusion limit an upper bound on instantaneous price volatility based on the current spread. 

\subsection{A bit of accounting}
Finally, we derive the self-financing portfolio equation from first principles in such a high frequency market.

Because after removing the special deals and the executions against hidden orders, all the trades do happen at the best bid or ask, the amount of cash exchanged is equal to
\begin{equation}
\Delta_n K =
\begin{cases}
 - (p - \frac{s_n}{2}) \Delta_n L \text{\quad if } \Delta_n L\ge 0 \\
 - (p + \frac{s_n}{2}) \Delta_n L \text{\quad else } 
\end{cases}
\end{equation}
That is, the provider pays the bid (resp. receives the ask) when he buys (resp. sells). This can be summarized by the equation:
\begin{equation}
\Delta_n K  = - p_n \Delta_n L + \frac{s_n}{2}|\Delta_n L|\label{fo:cash}
\end{equation}

\subsubsection{The aggregate liquidity provider's wealth}
We define wealth as
\begin{equation}
\label{fo:wealth}
X_n = p_n L_n + K_n
\end{equation}
that is, the cash held by the liquidity provider plus the value of her inventory marked to the mid-price.
The wealth $X_n$ of the aggregate liquidity provider is plotted in Figure \ref{fig:Empirical_Summary} against the trade time $n$.

\subsubsection{The discrete self-financing equation}
We derive the dynamics of the wealth process $X$ from equations (\ref{fo:cash}) and  (\ref{fo:wealth}):
\begin{align}
\Delta_n X &= L_n \Delta_n p + p_n \Delta_n L +\Delta_n p \Delta_n L + \Delta_n K \nonumber\\
				 &= L_n \Delta_n p + \frac{s_n}{2} |\Delta_n L| + \Delta_n p \Delta_n L \label{eq:Empirical_Proposed}
\end{align}

\subsubsection{Empirical validation}

We compare four quantities: 1) the actual wealth, 2) the wealth computed from the standard self-financing equation:
\begin{equation}
\label{fo:ssf}
\Delta_n X = L_n \Delta_n p
\end{equation}
used in the classical Black-Scholes option pricing and Merton portfolio theories, 3) the wealth computed from the standard self-financing condition: 
\begin{equation}
\label{eq:Empirical_Standard}
\Delta_n X = L_n \Delta_n p + \frac{s_n}{2} |\Delta_n L| 
\end{equation}
advocated to include transaction costs in Merton's theory of optimal portfolio choice, and finally 4) the wealth computed from our self-financing condition \eqref{eq:Empirical_Proposed}.
\begin{figure}[htbp]
	\centering
		\includegraphics[width=0.9\textwidth]{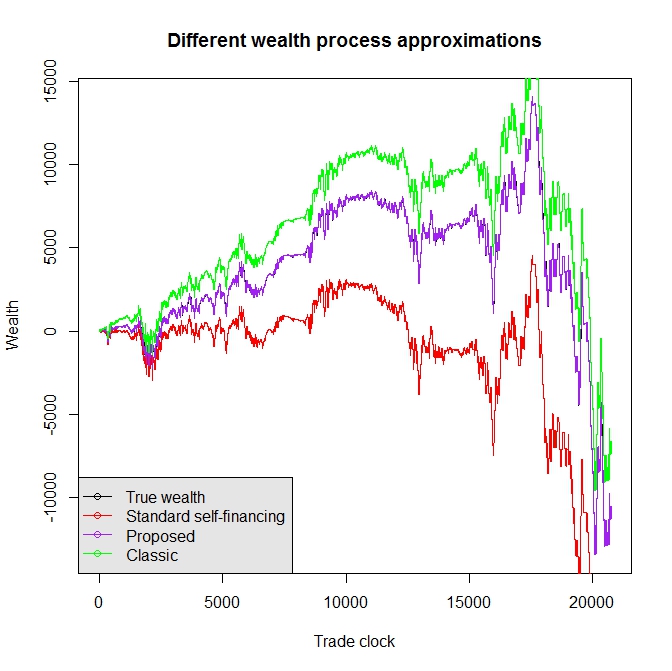}
	\caption{Plots of the actual wealth of the aggregate liquidity provider (as in Figure \ref{fig:Empirical_Summary}) together with the wealth computed from the three self-financing conditions. Red is the frictionless case. Green corresponds to  \eqref{eq:Empirical_Standard}. 
	The actual wealth and the wealth computed from our self-financing condition (\ref{eq:Empirical_Proposed}) are indistinguishable on the graph.}
	\label{fig:wealth_approx}
\end{figure}
The plots of these four wealth processes are given in Figure \ref{fig:wealth_approx} for Coca Cola stock on April 18, 2013. Changing stock or changing day does not seem to affect the following facts which are easily illustrated in this figure.
The wealth computed from the standard self-financing equation of the Black-Scholes theory clearly underestimates the actual wealth of the aggregate liquidity provider
The wealth computed from the classic equation \eqref{eq:Empirical_Standard} tries to correct for the lack of transaction cost, but it over-shoots and over-estimates the wealth of the aggregate liquidity provider. The error is reduced and practically canceled by including the adverse selection term given by the quadratic covariation, and using our proposed formula \eqref{eq:Empirical_Proposed}. The quadratic covariation between inventory and price matters!

\subsubsection{Recovering the frictionless case}

A surprising property worth mentioning concerns the case $s_n = 0$. Indeed, the latter does \emph{not} correspond to the frictionless case. Rather, choosing price jumps $|\Delta_n p| = s_n/2$ and using the fact that the price impact is negative, i.e. $\Delta_n L \Delta_n =-|\Delta_n L \Delta_n |$, yields the identity
$$
\Delta_n X = L_n \Delta_n p + \frac{s_n}{2} |\Delta_n L| - |\Delta_n p| |\Delta_n L|= L_n \Delta_n p  
$$
which is the standard self-financing portfolio equation. In our high frequency framework, it is not the absence of transaction costs that corresponds to the frictionless case, but rather the absence of price-recovery, for in that case, the price impact exactly compensates the transaction costs.

\subsection{Summary}\label{sec:Empirical_Summary}
Our empirical evidence suggests the following equations and features for the inventory $L$ and wealth $X$ of a liquidity provider, the bid-ask spread $s$ and the price $p$:

\subsubsection{Self-financing equation}
\begin{equation}
\Delta_n X = L_n \Delta_n p + \frac{s_n}{2} |\Delta_n L| + \Delta_n p \Delta_n L \label{eq:Empirical_Formula}
\end{equation}

\subsubsection{Price impact (adverse selection)}
\begin{equation}
\Delta_n L \Delta_n p \le 0
\end{equation}

\subsubsection{Price recovery}
\begin{equation}
|\Delta_n p| \le s_n
\end{equation}

\subsubsection{Vanishing bid-ask spread}
$s_n$ and $\Delta_n p$ are of the same order of magnitude, namely $s_n \approx |\Delta_n p|$.

\section{\textbf{Continuous equation: Bid-Ask case}}
\label{sec:bid-ask}
The aim of this section is to derive formula \eqref{eq:Intro_Formula} from its discrete version \eqref{eq:Empirical_Formula} established in the previous section. In the process, we shall also derive continuous-time analogs
 of the price impact / adverse selection constraint, the price recovery and vanishing bid-ask spread condition
 equivalents of the relationships of subsection \ref{sec:Empirical_Summary}. The key is to let the tick size vanish, assume that the bid-ask spread vanish with the tick size, and assume that the price and inventory converge to diffusion limits. 

\begin{remark}The bid-ask spread is of the same order of magnitude as the price jumps: the tick size. This implies in particular that the bid-ask spread vanishes in absolute terms in the diffusion limit and should therefore be measured in tick-size.
The mathematical consequence of this simple comment is that transaction costs do not diverge as the tick size goes to zero, allowing inventories that have infinite variations in the continuous limit.
\end{remark}

\vskip 4pt
The main technical tool we use in this section and section \ref{sec:General_OB} is the functional law of large number for discretized process by Jacod and Protter \cite{Jacod}. Let $\phi_{\sigma^2}$ denote the density function of the Gaussian distribution with mean $0$ and variance $\sigma^2$.

\begin{theorem}[(7.2.2) from \cite{Jacod}]
\label{thm_Jacod}
Let $(t, y) \rightarrow F_t(y)$ be an $\F_t$-adapted random function that is a.s. continuous in $(t,y)$ and verifies the growth condition $F_t(y) \le C y^2$ for some constant $C$.
Then we have the following convergence u.c.p. as $N\rightarrow \infty$ for any continuous It\^o process $Y$:
\begin{equation*}
\frac{1}{N}\sum_{n=1}^{\lfloor Nt \rfloor} F_{n/N}\left({\sqrt{N}(Y_{(n+1)/N} - Y_{n/N})}\right) \rightarrow \int_0^t \int F_s(y) \phi_{\sigma^2_s}(y)dy \; ds
\end{equation*}
where $\sigma^2_t = \frac{d[Y,Y]_t}{dt}$. 
\end{theorem}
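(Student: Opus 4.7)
The plan is to proceed by the standard martingale decomposition technique for functional limit theorems of It\^o semimartingales. First I would \emph{localize} by stopping times to reduce to the case where the drift $\mu_t$ and the volatility $\sigma_t$ of $Y$ are uniformly bounded in $(t,\omega)$, and where the random function $F_t(y)$ together with the pre-factor $C$ in the growth bound is controlled uniformly in $\omega$ on every compact set of $(t,y)$. After this localization the summand $F_{n/N}(\sqrt{N}\Delta_n Y)$ is dominated in $L^1$ by $CN(\Delta_n Y)^2$, whose conditional expectation given $\F_{n/N}$ is $O(1)$, so the array is $L^1$-bounded.

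Next I would introduce the martingale-difference decomposition
\begin{equation*}
F_{n/N}(\sqrt{N}\Delta_n Y) = \E\bigl[F_{n/N}(\sqrt{N}\Delta_n Y)\mid \F_{n/N}\bigr] + \xi_n,
\end{equation*}
and handle the two pieces separately. For the martingale remainder, Doob's maximal inequality applied to the discrete martingale $M^N_t = \frac{1}{N}\sum_{n\le \lfloor Nt\rfloor}\xi_n$ together with the conditional second-moment bound $\E[\xi_n^2\mid \F_{n/N}]\le C\,\E[(\sqrt{N}\Delta_n Y)^4\mid \F_{n/N}]=O(1)$ gives an $L^2$-bound of order $1/N$ on $\sup_{t\le T}|M^N_t|$, hence u.c.p.\ convergence of this piece to zero.

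To identify the limit of the sum of conditional expectations, I would use the fact that over a step of size $1/N$ the drift contributes $O(1/N)$ to $\Delta_n Y$, so $O(1/\sqrt N)$ after the diffusive rescaling and thus asymptotically negligible, while $\sqrt{N}\int_{n/N}^{(n+1)/N}\sigma_s\,dW_s$ is, conditionally on $\F_{n/N}$, close in distribution to $\mathcal{N}(0,\sigma_{n/N}^2)$ up to the oscillation of $\sigma$ on the tiny interval. Combining a.s.\ continuity of $F$ and of $s\mapsto\sigma_s^2$ with the quadratic growth hypothesis and dominated convergence yields
\begin{equation*}
\E\bigl[F_{n/N}(\sqrt{N}\Delta_n Y)\mid\F_{n/N}\bigr] = \int F_{n/N}(y)\phi_{\sigma_{n/N}^2}(y)\,dy + o(1),
\end{equation*}
uniformly in $n\le \lfloor Nt\rfloor$. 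The right-hand side, summed in $\frac{1}{N}\sum_{n\le\lfloor Nt\rfloor}$, is then a Riemann sum for the claimed integral, and converges uniformly in $t$ on compacts by continuity of $s\mapsto \int F_s(y)\phi_{\sigma_s^2}(y)\,dy$.

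The main obstacle I expect is upgrading every estimate above to be \emph{uniform in $t$}, as needed for u.c.p.\ convergence rather than mere pointwise convergence in probability. The martingale term is uniform via Doob, but the Gaussian approximation of the conditional expectation must be quantitative (e.g.\ by comparing with a Wasserstein or bounded-Lipschitz metric) and combined with a uniform modulus of continuity of $(s,y)\mapsto F_s(y)\phi_{\sigma_s^2}(y)$ on the compact set extracted in the localization. A secondary subtlety is the degenerate case $\sigma_s=0$, where $\phi_{\sigma_s^2}$ is formally interpreted as a Dirac mass at zero; but the growth condition $F_s(y)\le Cy^2$ forces both the discrete conditional expectation and the limiting integral to vanish at such times, so the formula remains consistent.
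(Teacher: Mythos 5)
The paper offers no proof of this statement: it is quoted verbatim as Theorem (7.2.2) of Jacod--Protter \cite{Jacod} and used as a black box. Your sketch is, in outline, the standard proof of that cited result --- localization, martingale-difference decomposition with a Doob bound of order $1/N$ on the fluctuation part, Gaussian approximation of the conditional expectations, and a Riemann-sum passage to the limit --- and it is essentially correct. The only place where the write-up overstates is the claim that the Gaussian approximation holds with a deterministic $o(1)$ ``uniformly in $n$'': since $F$ and $\sigma$ are random and merely a.s.\ continuous, the honest statement is that the averaged error $\frac{1}{N}\sum_{n\le\lfloor Nt\rfloor}\bigl|\E[F_{n/N}(\sqrt{N}\Delta_n Y)\mid\F_{n/N}]-\int F_{n/N}\,\phi_{\sigma^2_{n/N}}\bigr|$ tends to $0$ in $L^1$ after localization (using the pathwise modulus of continuity of $\sigma$, a truncation in $y$ via the quadratic growth bound and the fourth-moment estimate, and dominated convergence in $\omega$); you flag exactly this obstacle, so the gap is one of phrasing rather than substance.
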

We proceed as follows:
\begin{enumerate}
\item We begin with the continuous processes for the inventory $L$, price $p$ and bid-ask spread $s$ as our \emph{data}. 
\item By discretizing them, we obtain the \emph{data} to plug into the discrete relationships listed in subsection \ref{sec:Empirical_Summary}, yielding our \emph{discrete time} output relationships.
\item Finally, we take the limit again to obtain the diffusion limits of our discrete output to obtain our continuous-time relationships.
\end{enumerate}

In discrete time, prices are a pure-jump process, and therefore have finite variations. It is common on larger time scales to consider the price as `zoomed out' enough to be approximated by a diffusion process. Mathematically, this corresponds to a vanishing tick size. Recall that tick size is typically of the order of magnitude of the cent\footnote{Decibasis point for some exchanges in the foreign exchange market.}, that is $10^{-4}$ relative to the typical stock price. Given the relative roughness of the path of inventories when compared to prices, see for example Figure \ref{fig:Empirical_Summary}, it seems reasonable to also expect high-frequency inventories to be modeled by processes with infinite variation.

\subsection{Mathematical Setup}
Let $\left(\Omega, \F, \FF, \P\right)$ be a filtered probability space supporting two Wiener processes $W$ and $W'$ with unspecified correlation structure. We consider two It\^o processes for the price $p$ and provider inventory $L$:
\begin{equation}
	\begin{cases}
		p_t &= p_0 + \int_0^t \mu_u du + \int_0^t \sigma_u dW_u \\
		L_t &= L_0 + \int_0^t b_u du + \int_0^t l_u dW'_u
	\end{cases}
\end{equation}
where $p_0$ and $L_0$ are $\F_0$-measurable square integrable random variables, and $\mu$, $\sigma$, $b$ and $l$ are $\FF$-adapted continuous processes. Finally, we also assume the existence of a  $\FF$-adapted continuous process $s$.

Now consider the discrete approximation $p^N_n = p_{n/N}$ and likewise for $L$, $\mu$, $\sigma$, $b$ and $l$. The interpretation is that $\frac{1}{\sqrt{N}}$ is the tick size, which we formally make vanish. For the bid-ask spread $s$, we define $s^N_n = \frac{1}{\sqrt{N}} s_{n/N}$ in line with our previous comments. Plugging these definitions into the equations from subsection \ref{sec:Empirical_Summary}, we obtain the discrete relationships:

\begin{equation}
\label{fo:3discrete}
	\begin{cases}
		\Delta_n X^N = L^N_n \Delta_n p^N + \frac{s_{n/N}}{2} \frac{1}{\sqrt{N}}|\Delta_n L^N| + \Delta_n p^N \Delta_n L^N \\
		\Delta_n L^N \Delta_n p^N \le 0 \\
	  |\Delta_n p^N| \le s^N_n
	\end{cases}
\end{equation}
where the first equation is understood as the definition of the wealth $X^N$.

\subsection{Main result}
\begin{theorem}
Assuming that relations \eqref{fo:3discrete} hold for every $N\ge 1$, then the limit $\lim_{N\rightarrow \infty} X^N_{\lfloor Nt \rfloor}$ exists for the uniform convergence in probability and defines a process $X_t$ which together with the It\^o processes $p_t$ and $L_t$ satisfy the relationships:
\begin{equation}
\label{fo:3continuous}
	\begin{cases}
		dX_t = L_t dp_t + \frac{s_t l_t}{\sqrt{2 \pi}} dt +d[L,p]_t \\
		d[L,p]_t \le 0 \\
		\sigma_t \le \sqrt{\frac{2}{\pi}} s_t
	\end{cases}
\end{equation}
\end{theorem}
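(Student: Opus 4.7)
The plan is to sum the first equation of \eqref{fo:3discrete} up to index $\lfloor Nt \rfloor$ and identify the u.c.p.\ limit of each of the three resulting sums, while showing that the two inequalities in \eqref{fo:3discrete} survive in integrated form. Writing
\[
X^N_{\lfloor Nt \rfloor} - X^N_0 = \sum_{n=0}^{\lfloor Nt \rfloor - 1} L^N_n \Delta_n p^N \;+\; \sum_{n=0}^{\lfloor Nt \rfloor - 1} \frac{s_{n/N}}{2\sqrt N}|\Delta_n L^N| \;+\; \sum_{n=0}^{\lfloor Nt \rfloor - 1} \Delta_n p^N \Delta_n L^N,
\]
the first sum is a left Riemann sum for the It\^o integral $\int_0^t L_u\,dp_u$ (continuous, adapted integrand against a continuous It\^o process, so u.c.p.\ convergence is classical), and the third converges u.c.p.\ to $[L,p]_t$ by the standard discrete approximation of the quadratic covariation of two continuous It\^o processes.

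The middle sum is where Theorem~\ref{thm_Jacod} is deployed. I would rewrite
\[
\sum_{n=0}^{\lfloor Nt \rfloor - 1} \frac{s_{n/N}}{2\sqrt N}|\Delta_n L^N| = \frac{1}{N}\sum_{n=0}^{\lfloor Nt \rfloor - 1} F_{n/N}\bigl(\sqrt N\,\Delta_n L^N\bigr)
\]
with $F_t(y) = \tfrac{s_t}{2}|y|$, and apply Theorem~\ref{thm_Jacod} with $Y = L$: the limit is
\[
\int_0^t \!\!\int_{\R} \frac{s_s}{2}|y|\,\phi_{l_s^2}(y)\,dy\,ds = \int_0^t \frac{s_s\, l_s}{\sqrt{2\pi}}\,ds,
\]
using $\E|Z| = l_s\sqrt{2/\pi}$ for $Z\sim N(0,l_s^2)$. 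Collecting the three limits gives the first line of \eqref{fo:3continuous} and establishes u.c.p.\ convergence of $X^N_{\lfloor Nt \rfloor}$.

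For the adverse selection constraint, each summand $\Delta_n L^N\Delta_n p^N \le 0$, so for any $0\le u \le t$ the restricted partial sum is nonpositive; passing to the limit yields $[L,p]_t - [L,p]_u \le 0$, i.e.\ $[L,p]$ is non-increasing, so $d[L,p]_t \le 0$. For the spread--volatility bound, I would apply Theorem~\ref{thm_Jacod} once more with $F(y)=|y|$ and $Y = p$: the sum $\tfrac{1}{N}\sum \sqrt N |\Delta_n p^N|$ converges u.c.p.\ to $\int_0^t \sigma_s\sqrt{2/\pi}\,ds$, while the pointwise bound $\sqrt N|\Delta_n p^N|\le s_{n/N}$ summed gives $\tfrac{1}{N}\sum s_{n/N} \to \int_0^t s_s\,ds$; continuity of $\sigma$ and $s$ then yields the announced pointwise bound relating $\sigma_t$ and $s_t$.

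The main technical obstacle is that $F(y)=|y|$ narrowly fails the quadratic growth condition $F(y)\le Cy^2$ required by Theorem~\ref{thm_Jacod} near $y=0$. I would address this by approximating $|y|$ uniformly on compacts by $F_\varepsilon(y)=\sqrt{y^2+\varepsilon^2}-\varepsilon$, which does satisfy a quadratic bound and for which the theorem applies verbatim, and then letting $\varepsilon \downarrow 0$ with a dominated-convergence argument based on the $L^2$ control $\E|\Delta_n L^N|^2 = O(1/N)$ supplied by the It\^o representation; alternatively one invokes a version of the Jacod--Protter functional law that allows linear growth. Once this analytic step is handled, everything else is routine: the passage of the inequalities to the limit uses only closedness of $\{\le 0\}$, and the identification of the constant $1/\sqrt{2\pi}$ is just the $L^1$ norm of a centered Gaussian.
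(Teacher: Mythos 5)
Your treatment of the first two relationships is essentially the paper's: the Riemann-sum and discrete quadratic-covariation limits are standard, the spread term is handled by Theorem \ref{thm_Jacod} with $F_t(y)=\tfrac{s_t}{2}|y|$ and $Y=L$, and the sign of $d[L,p]$ passes to the limit because each summand is nonpositive. (The paper applies the theorem to $\tfrac{s_t}{2}|y|$ directly, reading the growth hypothesis as a condition at infinity, so your regularization of $|y|$ near the origin is harmless but not needed.)

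The gap is in the third relationship. Averaging the price-recovery inequality directly, as you propose, gives
\[
\int_0^t \sqrt{\tfrac{2}{\pi}}\,\sigma_u\,du \;\le\; \int_0^t s_u\,du ,
\]
hence only $\sigma_t \le \sqrt{\pi/2}\,s_t \approx 1.25\,s_t$, which is strictly weaker than the claimed $\sigma_t \le \sqrt{2/\pi}\,s_t \approx 0.80\,s_t$ (note $\sqrt{2/\pi}<1<\sqrt{\pi/2}$). The paper obtains the sharper constant by first multiplying the pointwise bound $|\sqrt{N}\Delta_n p^N|\le s_{n/N}$ by $|\sqrt{N}\Delta_n p^N|$ before averaging, i.e., by applying Theorem \ref{thm_Jacod} to $Y=p$ with the test function $F_t(y)=y^2-s_t|y|$. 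Each summand $|\sqrt{N}\Delta_n p^N|\bigl(|\sqrt{N}\Delta_n p^N|-s_{n/N}\bigr)$ is nonpositive by the third line of \eqref{fo:3discrete}, and the limit over any window $[t_1,t_2]$ is $\int_{t_1}^{t_2}\sigma_u\bigl(\sigma_u-\sqrt{2/\pi}\,s_u\bigr)\,du\le 0$; the gain comes from the mismatch between the second Gaussian moment ($\sigma^2$) and the first absolute moment ($\sigma\sqrt{2/\pi}$), which puts the factor $\sqrt{2/\pi}$ on the favorable side. As written, your argument does not prove the stated inequality; replacing $F(y)=|y|$ by $F_t(y)=y^2-s_t|y|$ repairs it.
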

\begin{proof}
Using  a localizing sequence of stopping times if needed, we can assume without any loss of generality that the process $s_t$ is bounded by a constant.
The convergence of the discrete approximations of $\int_0^t L_u dp_u$ and $[L,p]_t$ is plain, proving the second relationship.

For the last term of the self-financing equation, we have that
\begin{equation}
\frac{s^N_n}{2} \frac{1}{\sqrt{N}}|\Delta_n L^N| = \frac{1}{2 N} s_{n/N} |\sqrt{N} \Delta_n L^N|
\end{equation}
which allows us to apply Theorem \ref{thm_Jacod} with $F_t(y) = \frac{s_t}{2}|y|$ and $Y_t = L_t$. This proves the self-financing equation.

The last relationship of \eqref{fo:3continuous} follows from applying the same theorem to the process $Y_t=p_t$ and the random function $F_t(y)=y^2-s_t|y|$. We obtain that, for each $t_1<t_2$, the quantity
\begin{equation}
\label{fo:t1t2}
\frac{1}{N}\sum_{n=\lfloor t_1 N\rfloor}^{\lfloor t_2 N\rfloor} \left((\sqrt{N} \Delta_n p^N)^2 - s_{n/N}|\sqrt{N}\Delta_n p^N|\right)
\end{equation} 
converges toward 
\begin{equation}
\int_{t_1}^{t_2} (\sigma_t  - \sqrt{\frac{2}{\pi}} s_t) \sigma_t dt,
\end{equation}
and the fact that this process is negative for all $t_1<t_2$ concludes the proof.
\end{proof}

\begin{remark}
Technically speaking, nothing prevents us from going through with the same limiting argument for the hidden part of the order book, simply replacing $p_t$ and $s_t$ by their `hidden' counterparts. Two practical problems appear however. First, measuring the hidden price and spread is difficult. Second, and more importantly, it is unclear by what to replace the price impact inequality, as adverse selection of hidden orders is not well studied or understood.
\end{remark}

\subsection{Time change}

Note that equation \eqref{eq:Intro_Formula} was proved in a \emph{trade clock}, which means that all the time-related quantities, such as volatility, must be measured per trade time. While this is a positive feature for high frequency models under this clock (e.g. \cite{Ane, Bouchaud}), it is less advantageous for financial problems working under a different clock. For example, pricing an option with a fixed maturity in the calendar clock may be difficult to do directly from equation \eqref{eq:Intro_Formula}. We therefore discuss how our proposed formula behaves under time-changes, with the canonical time-change being the switch to a calendar clock. Another possible time-change is the switch from a trade clock to a volume clock.

\begin{definition}
We define a good time change to be an $\F_t$-adapted stochastic process $\tau_t$ such that $\tau_0 = 0$ and 
\begin{equation}
d\tau_t = n^2_t dt
\end{equation}
with $n_t$ uniformly bounded away from zero.
\end{definition}

We start from:
\begin{equation}
\begin{cases}
dp_t &= \mu_t dt + \sigma_t dW_t \\
dL_t &= b_t dt + l_t dW'_t \\
dX_t &= L_t dp_t + \frac{s_t}{\sqrt{2\pi}} l_t dt + d[L, p]_t
\end{cases}
\end{equation}
with $d[L,p]_t \leq 0$, and we study the processes $\tilde{p}_t = p_{\tau_t}$, $ \tilde{L}_t = L_{\tau_t}$ and $ \tilde{X}_t = X_{\tau_t}$. Note that all the time-changed processes are now adapted with respect to the time-changed filtration $\tilde{\F}_t = \F_{\tau_t}$. Note also that the processes $\tilde{W}_t = \int_0^{\tau_t} 1/n_{\tau^{-1}_u} dW_u$ and $\tilde{W}'_t = \int_0^{\tau_t} 1/n_{\tau^{-1}_u} dW'_u$ are $\tilde{\F}_t$ Wiener processes.

A simple chain-rule leads to the time-changed dynamics:
\begin{equation}
\begin{cases}
d\tilde{p}_t &= \tilde{\mu}_t dt + \tilde{\sigma}_t d\tilde{W}_t \\
d\tilde{L}_t &= \tilde{b}_t dt + \tilde{l}_t d\tilde{W}'_t \\
d\tilde{X}_t &= \tilde{L}_t d\tilde{p}_t + \frac{\tilde{s}_t}{\sqrt{2\pi}} \tilde{l}_t dt + d[\tilde{L}, \tilde{p}]_t
\end{cases}
\end{equation}
where
\begin{align*}
\tilde{\mu}_t = n^2_t \mu_{\tau_t} ; \quad \tilde{b}_t = n^2_t b_{\tau_t} \\
\tilde{\sigma}_t = n_t \sigma_{\tau_t}; \quad \tilde{l}_t = n_t l_{\tau_t}
\end{align*}
which are standard, as well as the more surprising:
\begin{equation}
\tilde{s}_t = n_t s_{\tau_t}
\end{equation}
\begin{remark}
Part of this result is expected: under the modified time clock, drifts and volatility must be measured by the new unit of time instead of by unit of trade, which corresponds to the factors $n^2_t$ and $n_t$. However, the unfortunate result is that the bid-ask spread must also be multiplied by $n_t$, which means that one needs to keep track of the process $\tilde{s}_t$ rather than the more natural process $s_{\tau_t}$.
\end{remark}

\begin{remark}
This issue is resolved when $s_t = \lambda \sigma_t$. Such a assumption would follow the conclusion of the empirical paper \cite{Bouchaud2} which suggests a linear relationship between daily average bid-ask spread and daily average volatility \emph{per trade}. From a theoretical perspective, this model is stable under time change, in the sense that $\tilde{s}_t = \lambda \tilde{\sigma}_t$, a desirable property.
\end{remark}

\begin{remark}
One could have also from the beginning worked under the changed clock and used the law of large numbers with irregular discretization schemes found in \cite{Jacod} to recover the same result.  
\end{remark}

\subsection{The case of a liquidity taker}
By symmetry, the corresponding equations for the inventory and wealth of a liquidity taker are 
\begin{equation}
	\begin{cases}
		dX_t = L_t dp_t - \frac{s_t l_t}{\sqrt{2 \pi}} dt +d[L,p]_t \\
		d[L,p]_t \ge 0
	\end{cases}
\end{equation}
Unfortunately, as we already pointed out, these equations are only \emph{necessary} conditions. Indeed, unlike with the standard self-financing equation, it is difficult to tell which processes $L$ and $p$ are admissible: we can only derive $X$ once $L$ and $p$ are given.

To give an example of why not all $L$ can be attained, assume the volume on the order book is finite. Then the volatility of $L$ must be bounded by the amount of volume available. Other factors that can come into play to determine which processes $L$ are actually attainable by market participants are: limit order fill rate, instantaneous price recovery and for market orders the ability to predict the next price jump. These factors will directly impact the volatility of $L$ and the possible correlation and quadratic covariation between $L$ and $p$.

Ultimately, supply and demand rule the price $p$ and volume $L$. $X$, however, stems from accounting rules.

\section{\textbf{Applications}}
Applications of the proposed relationships depend on models of the inventory $L$ and the price $p$. Notice that, when we formulate an optimization problem, we often assume that the inventory can be any It\^o process. This is an act of faith as making it happen typically requires good execution algorithms and limit order fill rates.

Reasonable models for the spread $s$ are easier to come by. We shall typically scale the spread with the price volatility: $s_t =  \sqrt{2 \pi} \lambda \sigma_t$ (for some constant $\lambda > 1/2$). This is consistent with the empirical literature on the matter, e.g. \cite{Bouchaud2}.

\subsection{Hedging}
In this subsection we explore perfect replication of European options, assuming that the corresponding inventory can be attained via high-frequency trades. Let $f$ be the payoff function of our option and let 
\begin{equation}
dp_t = \mu(t,p_t)dt + \sigma(t,p_t)dW_t
\end{equation} 
be a Markovian stochastic differential equation for the price. Denote by $L$ the inventory of the hedger and let us assume that it is of the form:
\begin{equation}
dL_t = b_t dt + l_t dW_t
\end{equation}
with  $b_t$ and $l_t$ continuous, bounded and adapted processes. Note that the dynamics of $L_t$ are driven by the same Wiener process as the price, so the model is complete and perfect replication is possible. Note also that $l_t <0$ corresponds to trading via limit orders and $l_t >0$ to trading via market orders. Furthermore, when working with this signed $l_t$, the self-financing equation writes the same for limit and market orders:
\begin{equation}
	dX_t = L_t dp_t - \frac{s_t l_t}{\sqrt{2 \pi}} dt +d[L,p]_t 
\end{equation}
as when $l_t<0$, we want to capture the transaction costs and when $l_t>0$, we need to pay them.

Assume that interest rates are zero and define by $v(t,p)$ the price of the option knowing that $p_t = p$. Then we have the replication equation
\begin{align*}
d(X_t - v(t,p_t)) = \left(L_t - \Delta_t\right)dp_t + d[p,L]_t - \frac{s_t}{\sqrt{2\pi}} l_t dt - (\Theta_t + \frac{1}{2}\Gamma_t \sigma^2(t,p_t))dt
\end{align*}
where $\Delta_t$, $\Theta_t$ and $\Gamma_t$ denote the usual Greeks evaluated at $t$ and $p_t$. Delta hedging the option removes the price risk and leads to the equation
\begin{align*}
dL_t &= d\Delta_t \\
		 &= \left(\partial^2_{t} v(t,p_t) + \frac{1}{2} \sigma^2(t,p_t)\partial^3_{tp^2} v(t, p_t)\right)dt + \Gamma_t dp_t 
\end{align*}
and in particular the identity
\begin{equation}
l_t = \Gamma_t \sigma(t, p_t)
\end{equation}
Note that therefore $l_t$ and $\Gamma_t$ must be of the same sign!

Finally, the pricing partial differential equation becomes
\begin{equation}
\partial_t v(t, p) +  \left(\lambda -\frac{1}{2}\right) \sigma^2(t, p) \partial^2_p v(t,p)  = 0
\end{equation}
with terminal condition $v(T,p) = f(p)$. As $\lambda > 1/2$, this leads to a multiplicative factor of $\sqrt{2 \lambda - 1}$ on the implied local volatility when compared to the frictionless case.

\begin{remark}
An important point is that negative Gamma options can be replicated via limit orders, while positive Gamma options can be replicated via market orders. This is assuming that one can guarantee perfect correlation (respectively anti-correlation) with the price process for the inventory $L_t$ to be driven by the same Wiener process as the price. 

This is consistent with the fact that one would not expect to use limit orders to delta-hedge a call option, as hedging a call option requires you to buy when the price goes up and sell when the price goes down: exactly the opposite of a limit order.
\end{remark}

\subsection{Market making}

In this section, we adapt to our framework the key insight of the model proposed in \cite{Stoikov}. The ultimate aim is to solve the optimization problem of a representative market maker choosing the spread and maximizing his profits. The trade-off he faces, and which is the key ingredient of the model, is the following: the smaller the spread, the likelier trades are, but the less profit he makes on each of them.

In a way similar to \cite{Stoikov, Stoikov2}, we model the probability of execution of a limit order by a decreasing function of the quoted spread. This will first be done at the microscopic level, to obtain a reasonable model for our inventory process $L$ at the macroscopic level. A key difference with \cite{Stoikov} is that we still impose the price impact constraint, which will further depress the market maker's profits because of adverse selection.

To guarantee the price impact constraint is satisfied, we use, at the microscopic level, a modified version of the Almgren and Chriss model \cite{Almgren} to relate the price to the aggregate inventory of the liquidity providers.
We assume that
\begin{equation}
\Delta_n L = - \lambda_{n+1} \Delta_n p
\end{equation}
for a $\F_{n+1}$-measurable, positive random variable $\lambda_{n+1}$. This is an unpredictable form of linear price impact, in the sense that, ex-post, the price increment is a linear function of the traded volume.

To capture the insight of \cite{Stoikov}, we model $\lambda_{n+1}$ in such a way that
\begin{equation}
\E[\left.\lambda_{n+1}\right|\F_n] = \rho_n(s_n) f_n(s_n); \quad \E[\left.\lambda^2_{n+1}\right|\F_n] = \left(f_n(s_n)\right)^2
\end{equation}
where $s_n$ is the market maker's chosen spread, and $\rho_n$ and $f_n$ are continuous, positive function with $f_n$ decreasing and $\rho_n \in [0,1]$. The assumption that $f_n$ is decreasing in the spread is inherited from \cite{Stoikov}, and the fact that $\rho$ must be smaller than $1$ is due to Jensen's convexity inequality. We assume $\lambda_{n+1}$ to be independent of $\Delta_n p$ conditional on $\F_n$.
Computing the predictable quadratic variation of $L_n$ yields:
\begin{equation}
 \sum_{k=1}^{n-1} f^2_k(s_k) \E\left[\left.\Delta_k p^2\right|\F_k\right],  
\end{equation}
while the predictable quadratic covariation of $L_n$ and $p_n$ is given by:
\begin{equation}
- \sum_{k=1}^{n-1} \rho_k(s_k) f_k(s_k) \E\left[\left.\Delta_k p^2\right|\F_k\right].  
\end{equation}
This suggests the use of the following model in the continuum limit:
\begin{equation}
\begin{cases}
dp_t &= \mu_t dt + \sigma_t dW_t \\
dL_t &= -\rho_t(s_t) f_t(s_t) \mu_t dt + f_t(s_t)\sigma_t dW'_t 
\end{cases} 
\end{equation}
with $d[W,W']_t = -\int_0^t \rho_u(s_u) du$ for some adapted, continuous  and positive functions $\rho_t(\cdot)$ and $f_t(\cdot)$ with $\rho_t \le 1$ and $f_t$ decreasing.
Note that the equation for $L_t$ can also be rewritten as:
\begin{equation}
dL_t = -\rho_t(s_t) f_t(s_t) dp_t + f_t(s_t)\sqrt{1 - \rho^2_t(s_t)}\sigma_t dW^\perp_t
\end{equation}
with a Wiener process $W^\perp_t$ independent from $W_t$. We will from now on assume that $p_t$ is adapted to the filtration generated by $W_t$.

Applying our wealth equation, we obtain:
\begin{equation}
X_T = L_T p_T - \int_0^T p_t dL_t + \frac{1}{\sqrt{2\pi}} \int_0^T \sigma_t  s_t f_t(s_t)dt .
\end{equation}
For both $f_t$ and $\rho_t$, a natural assumption is that they are functions of the spread rescaled by the volatility:
\begin{equation}
f_t(s) = f(s/\sigma_t); \quad \rho_t(s_t) = \rho(s_t/\sigma_t)
\end{equation}
for some $C^0$ decreasing function $f$ and $C^0$ function $\rho$. We will furthermore assume that $g(x) = xf(x)$ is a decreasing function for $x$ large enough, that $g(x) \rightarrow 0$ as $x\rightarrow \infty$, and that $f(x)>0$ for all $x\ge 0$.

The problem of a risk-neutral market maker attempting to set the spread optimally is to maximize:
\begin{equation}
\sup_{s} \E X_T.
\end{equation}
We solve this control problem using the Pontryagin maximum principle. Let us define a few functions first.

\begin{lemma}
For all $a>0$, define the function $F_a$ by
\begin{equation}
F_a : x \mapsto \frac{x}{\sqrt{2\pi}} f(x) - a \rho(x) f(x)
\end{equation}
Then the function 
\begin{equation}
M(a) = \max_{x \in [0,\infty)} F_a(x)
\end{equation}
is well defined, continuous, and decreasing in $a$. Furthermore, there exist a measurable selection
\begin{equation}
m(a) \in \textit{argmax}_{x \in [0,\infty)} F_a(x)
\end{equation}
and we have that $m(a) >0$.
\end{lemma}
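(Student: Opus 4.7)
The plan is to analyze $F_a(x) = f(x)\bigl(\tfrac{x}{\sqrt{2\pi}} - a\rho(x)\bigr)$ by exploiting three structural ingredients: continuity of $F_a$, the decay $F_a(x)\to 0$ as $x\to\infty$ (which follows from $g(x) = xf(x)\to 0$, hence also $f(x)\to 0$ for $x\geq 1$, together with $\rho\leq 1$), and the boundary value $F_a(0) = -a\rho(0)f(0)\leq 0$. These three facts will confine the supremum to a compact subinterval and force it to be attained strictly away from zero.

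For the existence of $M(a)$ and the positivity claim $m(a)>0$, I would exhibit a concrete point where $F_a$ is strictly positive: taking $x^\star := 2a\sqrt{2\pi}$ and using $\rho \leq 1$ yields $F_a(x^\star) \geq f(x^\star)(2a - a) = af(x^\star) > 0$ because $f>0$. Combined with the decay at infinity and continuity, this pins the supremum inside a compact interval $[0,R_a]$, on which continuity of $F_a$ delivers a maximizer. Since $F_a(0)\leq 0 < F_a(x^\star) \leq M(a)$, no maximizer can equal zero, so every element of the argmax is strictly positive.

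The continuity and monotonicity of $M$ both follow from one uniform Lipschitz estimate: $|F_{a'}(x) - F_a(x)| = |a'-a|\,\rho(x)f(x) \leq f(0)|a'-a|$ for all $x\geq 0$, using $\rho\leq 1$ and that $f$ is decreasing. Taking suprema on both sides yields $|M(a')-M(a)| \leq f(0)|a'-a|$, so $M$ is globally Lipschitz. Monotonicity is then immediate: since $\rho f \geq 0$, the map $a\mapsto F_a(x)$ is pointwise non-increasing, which passes through the supremum to give $M$ non-increasing; strict decrease, if desired, follows by evaluating the difference at a maximizer of $F_{a'}$ (which is positive, and where $\rho f > 0$).

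For the measurable selection, I would set $m(a) := \min \Phi(a)$ with $\Phi(a) := \{x\geq 0 : F_a(x) = M(a)\}$; by the first step this set is nonempty and compact, so the minimum is attained. Borel measurability of $m$ follows from the identity $\{a : m(a)\leq t\} = \{a : \max_{x\in[0,t]} F_a(x) = M(a)\}$, since both the partial maximum and $M$ are continuous in $a$ by the same Lipschitz argument, and this level set is therefore closed. An alternative would be to invoke Berge's theorem of the maximum to obtain upper hemicontinuity of $\Phi$ and then apply Kuratowski--Ryll-Nardzewski. The main obstacle is really just the positivity assertion $m(a) > 0$: it is the only place the assumption $\rho\leq 1$ is used in an essential way, to exhibit an explicit $x^\star$ with $F_a(x^\star) > 0$; everything else is a routine consequence once that point is in hand.
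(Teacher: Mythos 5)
Your proof is correct, and its skeleton --- exhibit an explicit point where $F_a>0$ using $\rho\le 1$, note $F_a(0)\le 0$, confine the maximum to a compact set, then handle continuity, monotonicity and selection --- is the same as the paper's; the differences are in the supporting machinery, where your route is more elementary and self-contained. The paper localizes the maximizer by exploiting the assumption that $g=xf$ is \emph{eventually decreasing} (constructing a continuous endpoint $\beta(a)$ so that Berge's maximum theorem applies to a compact-valued correspondence), and it obtains the measurable selection from the Measurable Maximum Theorem (Thm.~18.19 of Aliprantis--Border). You instead get the localization directly from $F_a(x)\to 0$ at infinity, using only $g\to 0$ and $\rho\le 1$ (so the eventual monotonicity of $g$ is not needed); you get continuity of $M$ from the uniform Lipschitz bound $\sup_x|F_{a'}(x)-F_a(x)|\le f(0)|a'-a|$ rather than Berge; and you build the selection by hand as $m(a)=\min\Phi(a)$, checking Borel measurability through the closed level sets $\{a:\max_{x\in[0,t]}F_a(x)=M(a)\}$. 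Incidentally, your witness $x^\star=2a\sqrt{2\pi}$, which gives $F_a(x^\star)\ge a f(x^\star)>0$, is sounder than the paper's: the paper asserts $F_a(a+1)\ge f(a+1)$, which does not follow from $\rho\le1$ (for $\rho$ near $1$ one has $\frac{a+1}{\sqrt{2\pi}}-a<1$; one must evaluate at a point of order $\sqrt{2\pi}\,a$ for the linear term to dominate). Two minor points to make explicit: the strict-decrease remark requires $\rho>0$ at the maximizer of $F_{a'}$, which holds under the paper's standing assumption that $\rho$ is positive; and in the measurability step one should note that $\max_{x\in[0,t]}F_a(x)=M(a)$ iff the argmax meets $[0,t]$, which is valid because the partial maximum over the compact interval $[0,t]$ is attained.
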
 
\begin{proof}
First, note that for all $a>0$, 
$$
F_a(0) = - a \rho(0) f(0) \le 0, \quad F_a(a+1) \ge f(a+1) >0
$$
Next, if $g$ is decreasing on the interval $[x_0, \infty)$, then we can define the function $\beta(a)$ as
$g^{-1} \circ f(a+1)$ if $f(a+1)$ is in $g[x_0,\infty)$, and $x_0$ otherwise. $\beta(a)$ is continuous and verifies $f(a+1) \ge g(x)$ for all $x\in (\beta(a), \infty)$.

This proves that the maximum of $F_a$ is attained on the compact $[a+1, \beta(a)]$. The continuity of $M$ holds by Berge's maximum theorem. It is decreasing by definition of $F_a$. The measurable selection result follows by Thm 18.19 of \cite{Aliprantis}.
\end{proof}

\begin{proposition}
Any solution of the control problem is of the form
\begin{equation}
\frac{s_t}{\sigma_t} = m\left(\alpha_t\right)
\end{equation}
where 
\begin{equation}
\alpha_t = \E\left[\left. p_T - p_t\right|\F_t\right] \frac{\mu_t}{\sigma^2_t} + \frac{Z_t}{\sigma_t},
\end{equation}
$Z_t$ being the volatility of the martingale representation of $p_T$
\end{proposition}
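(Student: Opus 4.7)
The plan is to integrate the state $L$ out of the objective $\E X_T$ so as to reduce the control problem to the pointwise maximization of $F_{\alpha_t}(\cdot)$ handled by the preceding lemma. The key auxiliary process is the martingale $Y_t := \E[p_T\mid\F_t]$. Because $p_t$ is adapted to the filtration generated by $W$ (as stipulated just before the statement) and $W,W^{\perp}$ are independent, $Y_t$ is adapted to the filtration of $W$ alone, and the martingale representation theorem applied with respect to that sub-filtration yields $dY_t = Z_t\,dW_t$ with $Z$ exactly the process appearing in the statement.

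First I would combine the wealth identity
\[
X_T = L_T Y_T - \int_0^T p_t\,dL_t + \frac{1}{\sqrt{2\pi}}\int_0^T \sigma_t s_t f_t(s_t)\,dt
\]
(using $Y_T = p_T$) with the integration-by-parts formula
\[
L_T Y_T = L_0 Y_0 + \int_0^T L_t\,dY_t + \int_0^T Y_t\,dL_t + [L,Y]_T ,
\]
and take expectations. The stochastic integral $\int L_t Z_t\,dW_t$ vanishes in expectation, as does the martingale component of $\int(Y_t-p_t)\,dL_t$. Reading off the $W$-coefficients of $L$ and $Y$ gives $d[L,Y]_t = -\rho_t(s_t)f_t(s_t)\sigma_t Z_t\,dt$. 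Collecting the surviving drift terms yields
\[
\E X_T = L_0 Y_0 + \E\int_0^T f_t(s_t)\left\{\frac{\sigma_t s_t}{\sqrt{2\pi}} - \rho_t(s_t)\bigl[(Y_t-p_t)\mu_t + \sigma_t Z_t\bigr]\right\}dt .
\]

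Substituting the scaling $f_t(s)=f(s/\sigma_t)$, $\rho_t(s)=\rho(s/\sigma_t)$ and introducing the dimensionless control $x_t := s_t/\sigma_t$, the integrand factors exactly as $\sigma_t^2\,F_{\alpha_t}(x_t)$ with
\[
\alpha_t = \frac{(Y_t-p_t)\mu_t}{\sigma_t^2} + \frac{Z_t}{\sigma_t} = \E[p_T - p_t\mid\F_t]\,\frac{\mu_t}{\sigma_t^2} + \frac{Z_t}{\sigma_t},
\]
which is precisely the process appearing in the proposition. Since $\sigma_t^2 \ge 0$, maximizing the integral reduces to the pointwise problem $\max_{x\ge 0} F_{\alpha_t(\omega)}(x)$ for a.e.\ $(t,\omega)$, solved by the measurable selection $x_t = m(\alpha_t)$ supplied by the preceding lemma. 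The optimal spread is therefore $s_t = \sigma_t\,m(\alpha_t)$, as announced.

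The principal obstacle is verifying that the stochastic integrals really vanish in expectation rather than being only local martingales; this requires sufficient integrability of $L_t Z_t$, $Y_t f_t(s_t)\sigma_t$ and $(Y_t-p_t)f_t(s_t)\sigma_t$ on $[0,T]$, and is handled by assuming $L_0,p_0\in L^2$ together with local boundedness of $\mu,\sigma,f,\rho$ and a standard stopping-time localization. A secondary point is admissibility: one must check that the candidate $t\mapsto \sigma_t m(\alpha_t)$ is $\F_t$-adapted and integrable enough to qualify as a control, which follows from the measurability of $m$ furnished by the lemma together with the continuity assumptions on the coefficients.
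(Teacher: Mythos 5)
Your proof is correct, but it takes a genuinely different route from the paper. The paper invokes the necessary part of the stochastic Pontryagin maximum principle: it writes down the generalized Hamiltonian $\mathcal{H}_t(s,L,Y,Z,Z^\perp)$, observes that the adjoint equation is solved by $Y_t=\E[p_T\mid\F_t]$ (whence $Z^\perp_t=0$ and $Z_t$ comes from martingale representation), and notes that the Hamiltonian then collapses to $\sigma_t^2 F_{\alpha_t}(s/\sigma_t)$, at which point the lemma applies. You instead integrate the state out of the objective by hand: substituting $p_T=Y_T$ into the terminal wealth, integrating $L_TY_T$ by parts, killing the martingale terms in expectation, and reading off the quadratic covariation $d[L,Y]_t=-\rho_t(s_t)f_t(s_t)\sigma_t Z_t\,dt$ yields $\E X_T = \E[L_0Y_0]+\E\int_0^T\sigma_t^2F_{\alpha_t}(s_t/\sigma_t)\,dt$ exactly, with the same $\alpha_t$. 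The two computations are algebraically identical (your surviving drift is precisely the paper's Hamiltonian evaluated along the adjoint solution), but your version buys something the paper's does not state explicitly: since the objective decomposes exactly into a constant plus a pointwise functional of the control, you get both necessity and sufficiency of $s_t=\sigma_t m(\alpha_t)$, whereas the necessary maximum principle alone only delivers the ``any solution must have this form'' direction. The price is that you must justify that the local martingales $\int L_t Z_t\,dW_t$ and the $dW'$-part of $\int(Y_t-p_t)\,dL_t$ are true martingales; you flag this correctly, and your treatment of it is no less rigorous than the paper's own (which silently assumes the maximum principle's integrability hypotheses). Your admissibility remark about the measurability of $t\mapsto\sigma_t m(\alpha_t)$ is a point the paper also leaves to the measurable-selection clause of the lemma.
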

\begin{proof}
We apply the necessary part of the stochastic Pontryagin maximum principle.
The generalized Hamiltonian is equal to:
\begin{eqnarray*}
&&\mathcal{H}_t(s, L, Y, Z, Z^\perp) = - \rho(s/\sigma_t) f(s/\sigma_t)\left[\left(Y_t -p_t\right)\mu_t + \sigma_t Z\right] \\
&&\phantom{???????????????????}+ \frac{\sigma_t}{\sqrt{2 \pi}} s f(s/\sigma_t) +  \sigma_t f(s/\sigma_t) \sqrt{1 - \rho^2(s/\sigma_t)} Z^{\perp}
\end{eqnarray*}
and the adjoint equation is solved by 
\begin{equation}
Y_t = \E\left[\left. p_T \right|\F_t\right]
\end{equation}
which, in particular, implies $Z^\perp_t =0$. $Z_t$ can be computed via the martingale representation theorem on the Brownian filtration generated by $W_t$.

The Hamiltonian to maximize therefore becomes
\begin{equation}
\sigma^2_t F_{\alpha_t} \left(\frac{s}{\sigma_t}\right)
\end{equation}
and the previous lemma concludes.
\end{proof}

Beyond the optimal control, one might be interested in the dependence in $\sigma_t$ and $\alpha_t$ of the market maker's expected profits as well as the volatility of his inventory. Note that a low volatility of the inventory means that the market maker has essentially pulled out of the market.

\begin{corollary}
The market maker's expected profits and losses are
\begin{equation}
\E\left[\int_0^T M\left(\alpha_t\right) \sigma^2_t dt\right]
\end{equation}
while the volatility of his inventory is 
\begin{equation}
\sigma_t f (m\left(\alpha_t\right)).
\end{equation}
\end{corollary}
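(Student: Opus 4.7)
The volatility claim is immediate: plugging the optimal $s_t = \sigma_t m(\alpha_t)$ from the preceding proposition into the inventory dynamics $dL_t = -\rho_t(s_t) f_t(s_t)\mu_t\,dt + f_t(s_t)\sigma_t\,dW'_t$, the diffusion coefficient equals $\sigma_t f(s_t/\sigma_t) = \sigma_t f(m(\alpha_t))$, which is the asserted volatility. All the work lies in identifying $\E X_T$ at the optimum.

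For the P\&L, the plan is to start from the representation
\begin{equation*}
X_T = L_T p_T - \int_0^T p_t\,dL_t + \frac{1}{\sqrt{2\pi}}\int_0^T \sigma_t s_t f_t(s_t)\,dt
\end{equation*}
already derived, and to rewrite it so that $F_{\alpha_t}$ surfaces. The key move is to replace $p_T$ by $Y_T$, where $Y_t = \E[p_T \mid \F_t]$ is the adjoint martingale from the proof of the proposition, and apply It\^o integration by parts to $L_t Y_t$. This converts $L_T p_T - \int_0^T p_t\,dL_t$ into $L_0 Y_0 + \int_0^T L_t\,dY_t + \int_0^T (Y_t - p_t)\,dL_t + [L,Y]_T$. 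Taking expectations (and assuming $L_0 = 0$ in line with the convention used throughout the paper), the stochastic integral $\int L_t\,dY_t$ vanishes, while $\int (Y_t - p_t)\,dL_t$ contributes $-\E\int_0^T (Y_t-p_t)\rho_t f_t \mu_t\,dt$.

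To compute $[L,Y]_T$, I would decompose $W'$ orthogonally as $W'_t = -\int_0^t \rho_u\,dW_u + \int_0^t \sqrt{1-\rho_u^2}\,dW^\perp_u$ with $W^\perp$ independent of $W$. Since $p$ is adapted to $\F^W$ by assumption, independence gives $Y_t = \E[p_T \mid \F^W_t]$, and martingale representation on $\F^W$ yields $dY_t = Z_t\,dW_t$ with $Z_t$ the process appearing in the definition of $\alpha_t$. A short computation then produces $d[L,Y]_t = -\rho_t f_t \sigma_t Z_t\,dt$. Collecting the drift and covariation contributions, the bracket $(Y_t-p_t)\mu_t + \sigma_t Z_t$ appears, which equals $\sigma_t^2 \alpha_t$ by definition of $\alpha_t$. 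Adding the transaction-cost term and factoring out $\sigma_t^2$, the integrand inside $\E$ becomes exactly $\sigma_t^2 F_{\alpha_t}(s_t/\sigma_t)$; at the optimum $s_t/\sigma_t = m(\alpha_t)$ this is $\sigma_t^2 M(\alpha_t)$, which is the claim.

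The main obstacle is keeping the filtrations straight when invoking martingale representation: one must justify that $Y$ has no $dW^\perp$-component, which rests on $p \in \F^W$ and $W^\perp \perp W$ rather than on a direct decomposition with respect to the enlarged filtration generated by both $W$ and $W'$. A secondary technicality is the usual integrability needed for the stochastic integrals to be true (not merely local) martingales; under the standing assumption that $f$ and $\rho$ are bounded and continuous, this follows by a routine localization together with mild square-integrability of $L$, $Z$, $\mu$, $\sigma$. Once these points are dispatched, the identification of the integrand with $\sigma_t^2 F_{\alpha_t}(s_t/\sigma_t)$ is pure bookkeeping.
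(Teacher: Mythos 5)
Your proof is correct and follows essentially the same route as the paper: the paper's one-line argument invokes the standard fact that the expected objective equals the expectation of the time-integral of the Hamiltonian along the optimal trajectory, and your integration by parts against the adjoint martingale $Y_t = \E[p_T \mid \F_t]$ (using $(Y_t-p_t)\mu_t + \sigma_t Z_t = \sigma_t^2\alpha_t$ and $d[L,Y]_t = -\rho_t f_t \sigma_t Z_t\,dt$) is precisely the derivation of that fact here, landing on $\E\int_0^T \sigma_t^2 F_{\alpha_t}(s_t/\sigma_t)\,dt = \E\int_0^T M(\alpha_t)\sigma_t^2\,dt$ at the optimum. The volatility claim is read off the diffusion coefficient of $dL_t$ exactly as in the paper.
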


\begin{proof}
The expected profits can be computed by integrating the Hamiltonian along the optimal path. The rest follows from the previous proposition.
\end{proof}

A consequence of the corollary is that the market maker is on average short $\alpha_t$ and, for $\alpha_t$ being fixed, long volatility.

There are now two distinct problems if one looks for tractable formulas. First, an explicit model for $p_T$ must be given for which the martingale representation term $Z_t$ can be computed. Second, one has to propose a function $g$ for which the maximal argument $m$ of $F$ can easily be characterized as a function of $\alpha_t$. 

\subsubsection{The martingale case}
Note that the latter problem is solved when $p_t$ is assumed to be a martingale. Indeed, if we have
\begin{equation}
dp_t = \sigma_t dW_t
\end{equation}
for some adapted, continuous and positive process $\sigma_t$. Then $\alpha_t = 1$ and we simply have 
\begin{equation}
s_t = m(1) \sigma_t
\end{equation}
circumventing the need for explicit functions $\rho$ and $f$. This result provides a theoretical argument for the empirical claim made in \cite{Bouchaud2} that the spread is a linear function of volatility.

Plugging this optimal spread back into the objective function, the market maker's expected profits and losses (P\&L) are
\begin{equation}
M(1) \E\left[\int_0^T \sigma^2_t dt\right]
\end{equation}

In the martingale case, the market maker is therefore \emph{on average}, Delta neutral, has negative Gamma but positive Vega.

\subsubsection{Explicit cases}
Other cases where $\alpha_t$ can be computed explicitly are:
\begin{itemize}
\item the Black-Scholes model
\begin{equation}
dp_t = \mu p_t dt + \sigma p_t dW_t
\end{equation}
in which case we obtain:
\begin{equation}
\E \left[\left. p_T\right|\F_t\right] = p_t e^{\mu(T-t)}; \quad Z_t = \sigma p_t e^{\mu(T-t)},
\end{equation}
and hence 
\begin{equation}
\alpha_t = \frac{\mu}{\sigma^2} \left(e^{\mu(T-t)} -1\right) + e^{\mu(T-t)}.
\end{equation}

\item the case of a mean reverting (Ornstein-Uhlenbock) price process 
\begin{equation}
dp_t = \rho (p_0 - p_t) dt + \sigma dW_t
\end{equation}
in which case: 
\begin{equation}
\E \left[\left. p_T\right|\F_t\right] = p_0 + e^{-\rho (T- t)}(p_t - p_0); \quad Z_t = \sigma e^{- \rho(T - t)},
\end{equation}
and hence
\begin{equation}
\alpha_t =   - \frac{\rho}{\sigma^2} \left(p_t - p_0\right)^2\left(e^{-\rho(T-t)} -1\right) + e^{-\rho(T-t)}.
\end{equation}
\end{itemize}

Unlike in the martingale case, it is hard to obtain any tractable formulas without specifying a functional form for $\rho$ and $f$. In the case where $\rho(x) = 1/(1+x)$ and $f(x) = 1/(1+x)^2$, the optimal spread becomes
\begin{equation}
s_t = \sigma_t \sqrt{1 + 3 \alpha_t}
\end{equation}

Note that $m$ is an increasing function of $\alpha_t$. To compare with the martingale case, where $\alpha_t = 1$, we therefore want to compare the ratio $\alpha_t$ to $1$ to study the impact of the model assumptions on the market maker's profits and inventory volatility.

\begin{itemize}
\item For the Black-Scholes model, $\alpha_t$ is larger than $1$ for $\mu >0$. For $\mu<0$, there exists a critical value depending on $T$ and $\sigma$ for which this ratio flips sign.

\item In the case of an Ornstein-Uhlenbock process, $\alpha_t$ is smaller than $1$ iff
\begin{equation}
\left(p_t - p_0\right)^2 < \frac{\sigma^2}{\rho}
\end{equation}
that is, if the current price $p_t$ isn't too far from the long-term average $p_0$.
\end{itemize}
\vskip 4pt\noindent
In line with intuition, the market maker quotes larger spreads, expects less profit, and captures less volume in the 'momentum' Black-Scholes model, as compared to the martingale case. In a mean-reverting market, unless the price is significantly away from its long-term trend, the market maker quotes smaller spreads, expects more profit and captures more volume than in the two other market models.

\subsection{Transaction cost analysis and measure of toxicity}

Following the suggestion of \cite{O'Hara}, one aim of the analysis is to provide macroscopic \emph{analysis tools} of microstructure for LFTs and academics. Not everyone wants to delve into the details of high frequency rules. In this respect, this paper only scratches the surface of the microstructure relationships HFTs can uncover, but it conveniently summarizes them and compares them to the standard 'frictionless' case.

\cite{O'Hara} identifies two particular tools that could be of use. One is what the paper calls 'transaction cost analysis', which we interpret to be the analysis of the difference between the effective wealth, and the one that would have been obtained in a frictionless market. Therefore, 'transaction costs' contain two terms:
\begin{itemize}
\item the spread component: 
\begin{equation}
 \pm \int_0^T\frac{s_t l_t}{\sqrt{2 \pi}} dt
\end{equation}
This component is positive if using limit orders, and negative if using market orders. Using one or the other affects the Gamma exposure of the trading strategy.
\item and the price impact component:
\begin{equation}
[L,p]_T
\end{equation}
which is always of the opposite sign to the spread component.
\end{itemize}
Depending on the Gamma of the LFT strategy, one or the other term will be the potential source of losses of the trader.

The second tool sought for is a measure of toxicity of the flow of market orders, preferably expressed as an index. Such an index could be used both by market makers to decide on whether it is profitable to provide liquidity and by LFTs to decide whether to execute their trades now or wait for better market conditions. The toxicity of market orders is entirely captured in our framework by the price impact term $[L,p]_T$. Two natural measures of the strength of this price impact term, and hence toxicity of market order flows, are as follows:
\begin{itemize}
\item The instantaneous negative correlation 
\begin{equation}
\label{fo:ct_toxicity}
\rho_t = - \frac{1}{\sigma_t\ell_t}\frac{d[L,p]_t}{dt}
\end{equation}
between the aggregate provider's inventory and the price. In particular, this could serve as a benchmark for a particular market maker to measure if the flow of market orders he captures is more or less toxic than that of the market as a whole. For the purpose of empirical studies, when working in the discrete trade clock, we compute the discrete time toxicity index as the negative of the empirical correlation of the inventory and the mid-price over the time interval $[0,t]$, namely:
\begin{equation}
\label{fo:dt_toxicity}
\rho^{(d)}_t = - \text{corr}(\Delta L_{[0,t]},\Delta p_{[0,t]})
\end{equation}
which is nothing but a plain discretization of formula \eqref{fo:ct_toxicity}.
\item The ratio between the integrated price impact and spread components of the aggregate provider's wealth.
\begin{equation}
r = -\sqrt{2\pi}\frac{[L,p]_T}{\int_0^T s_t l_t dt}
\end{equation}
which can be discretized as
\begin{equation}
r^{(d)} = - 2 \frac{\sum \Delta_n p \Delta_n L}{ \sum s_n |\Delta_n L|}
\end{equation}
The market maker in particular holds an implicit option on this quantity: he can pull out of the market if the ratio is larger than $1$, as in that case he loses money even in the absence of long term alpha trading by his LFT clients. 
\end{itemize} 
The advantage of the first measure of toxicity is that it measures the immediate proportion of toxic versus non-toxic market orders. The disadvantage is that it must be estimated via statistical procedures. The second measure, on the other hand, is more closely related to the actual P\&L of a market maker but must be computed over a longer time horizon, making it an ex-post analysis tool.

We give a table illustrating these two measures across several stocks on a same given trading day.
\begin{table}[htbp]
	\centering
		\begin{tabular}{l|c|c }
						Stock			&	$\rho^{(d)}$    & $r^{(d)}$ \\
									\hline
					 AAPL    &  0.17270704 & 0.19904208 \\
					 GOOG    &  0.23689058 & 0.32856196 \\
					 BRCM    &  0.19237560 & 0.29776003 \\
					 CELG    &  0.26835355 & 0.48287317 \\
					 CTSH    &  0.33887494 & 0.51758560 \\
					 CSCO    &  0.08393210 & 0.09300757 \\
					 BIIB    &  0.27832205 & 0.40193651 \\
					 AMZN    &  0.23614694 & 0.30494250 \\
					 GPS     &  0.20956508 & 0.48908889 \\
					 SFG     &  0.24173454 & 0.57253111 \\
					 INTC    &  0.05301259 & 0.05574866 \\
					 GE      &  0.10889870 & 0.11888714 \\
					 JKHY    &  0.33407745 & 0.56987813 \\
					 PFE     &  0.15849674 & 0.15958849 \\
					 CBT     &  0.34887086 & 0.74490980 \\
					 AGN     &  0.35890531 & 0.78020785 \\
					 CB      &  0.38667565 & 0.58090719 \\
					 AA      &  0.08046277 & 0.08406282 \\
					 FPO     &  0.49598056 & 1.14964119 \\
		\end{tabular}
			\caption{Values of the toxicity indexes on sample stocks.}
\end{table}

\section{\textbf{Continuous equation: general order book shape}}\label{sec:General_OB}
While on our particular choice of stock, most of the trades happened at the best bid or ask price, we wish to generalize our results to a general limit order book. This section starts by formally introducing the notion of limit order book and deriving some basic machinery before going through with the same diffusion limit strategy as section \ref{sec:bid-ask}.

\subsection{Microscopic description of the order book}

Borrowing from a time-honored method in statistical physics, we first describe in depth the interactions between agents at a microscopic level before deriving effective equations holding at the macroscopic level. We consider a \emph{single} liquidity taker and a \emph{single} liquidity provider. They trade an asset whose possible price range is $(0,\infty)$ via a limit order system. The liquidity provider always moves first by choosing the \emph{limit orders} she places on the limit order book. These limit orders are represented by a control variable $(b,a)$ consisting of a pair of strictly positive measures on $(0,\infty)$. The liquidity taker then chooses the control variable $(\beta,\alpha)\in(0,\infty)\times (0,\infty)$ representing market orders that he wants to execute on that limit order book.

Throughout this section we use the liquidity provider's point of view to track changes in portfolio positions and ignore the following high frequency phenomena:
\begin{enumerate}
\item \emph{Slippage}. Market orders execute immediately at their intended price.
\item \emph{Partial fills}. Market orders consume all the volume present at a given price\footnote{This property automatically holds when you formally consider a continuous order book distribution.}.
\item \emph{Hidden orders}. All limit orders are public.
\end{enumerate}

\subsubsection{Basic relationships}
We first focus on \emph{basic} relationships between the two agents, their orders and inventories. 

\vskip 5pt
\hskip -15pt
\begin{minipage}{.4\textwidth}
			\includegraphics[width=1\textwidth]{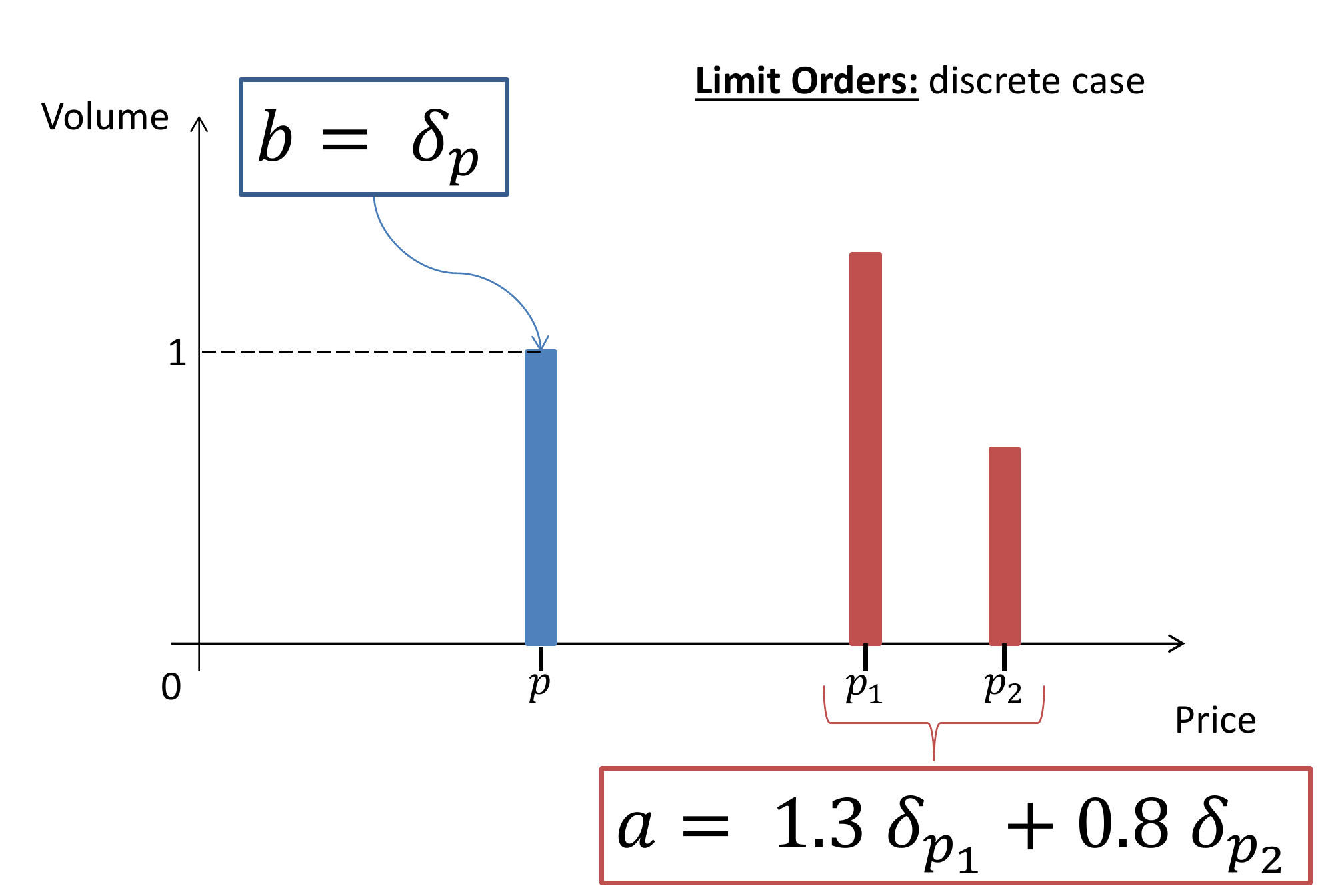}
\end{minipage}
\begin{minipage}{.6\textwidth}
The control $(b,a)$ of the liquidity provider represents her \emph{limit orders}. A bid for one unit of the asset placed at a price $p$ is represented by the probability measure $b = \delta_p$, while an offer (or ask) for one unit at price $p'$ by $a = \delta_{p'}$. If the provider places multiple limit orders, we sum these unit masses and obtain two non-negative measures $b$ and $a$ representing the liquidity provider's aggregate orders.
\end{minipage}
\vskip 5pt

We will call $(b,a)$ a \emph{limit order book}, or order book. We define the best bid and ask of an order book in the following way:

\begin{definition}[Best bid and ask]
Let $(b,a)$ be an order book. Then we define the best bid and best ask prices to be
\begin{equation}
\bar{b} = \sup\{p\in\text{supp}(b)\} \;,\quad \underline{a} = \inf\{p\in\text{supp}(a)\}
\end{equation}
Here we use the notation $\text{supp}(\mu)$ for the topological support of the measure $\mu$.
\end{definition}

\vskip 5pt
\hskip -15pt
\begin{minipage}{.6\textwidth}
\begin{remark}
In real markets, such limit orders can only be placed on a discrete grid, and the resulting $a$ and $b$ are always discrete measures. The recent push of high frequency markets to refine their grid may justify considering measures $a$ and $b$ that are absolutely continuous with respect to the Lebesgue measure.
\end{remark}
\end{minipage}
\begin{minipage}{.4\textwidth}
			\includegraphics[width=1\textwidth]{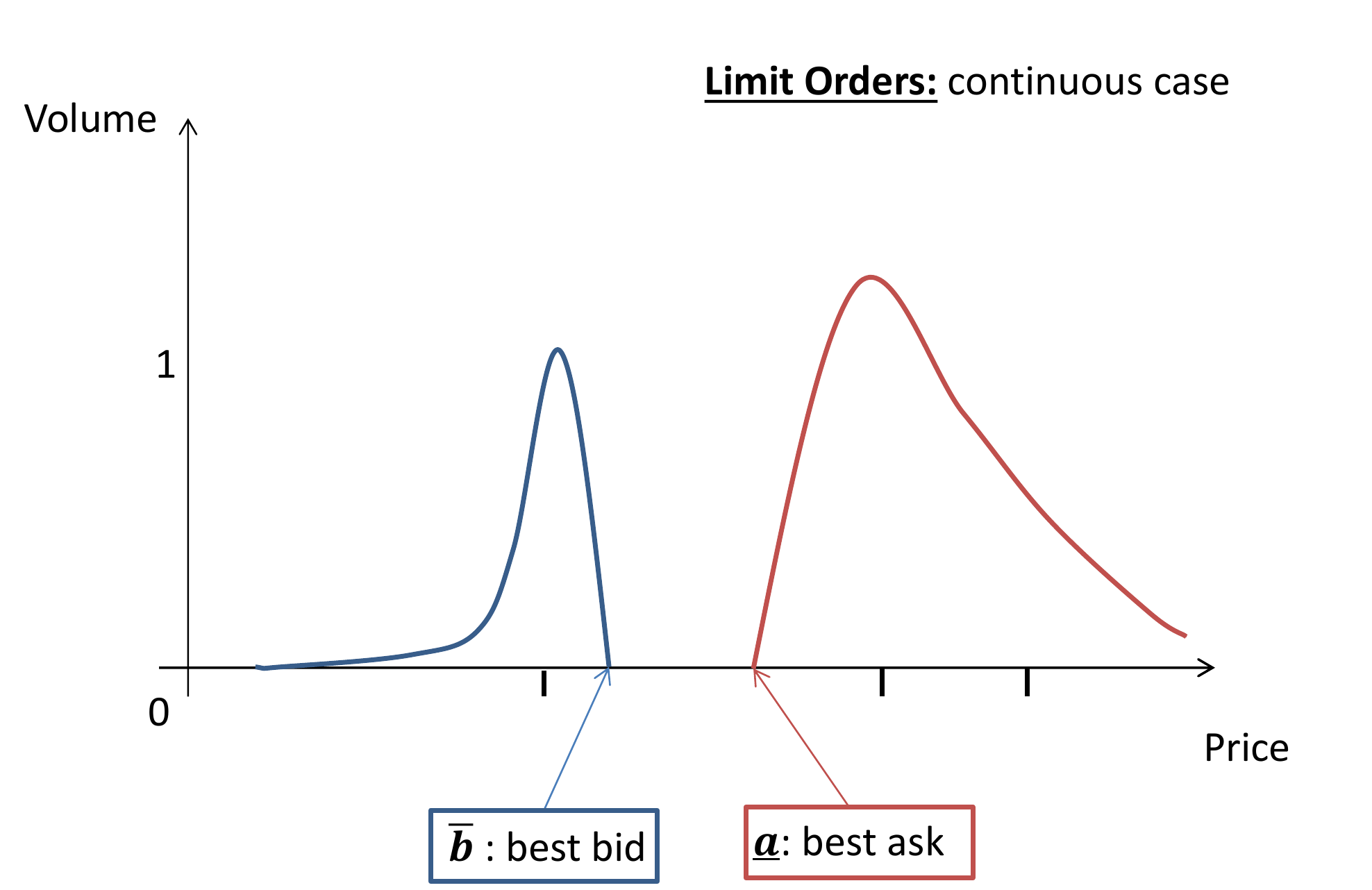}
\end{minipage}
\vskip 5pt

The control $(\beta,\alpha)$ of the liquidity taker represents his \emph{market orders}. A market order placed against the bids will cause all the bid orders above and including the price $\beta$ to be executed. For market orders against the ask, all the limit orders \emph{below} the level $\alpha$ will be executed. The limiting cases $\alpha = 0$ and $\beta = \infty$ correspond to 'empty' market orders that do not execute any limit orders. The execution of a market order leads to the following changes in cash and inventory: 

\begin{definition}[Execution of a market order]
Assume the order book is $(b,a)$ and that the liquidity taker chooses the pair $(\beta,\alpha)$. Then the change $\Delta L$ of \emph{inventory} triggered by the trade and the change $\Delta K$ in \emph{cash} that the \emph{liquidity provider} is subject to are defined by:
\begin{align}
\Delta L &= b[\beta,\infty) - a(0,\alpha] \label{atomic_asset_change}\\
\Delta K &= \int_{(0,\alpha]} x a(dx) -\int_{[\beta,\infty)} x b(dx) \label{atomic_cash_change}
\end{align}
\end{definition}

\vskip 5pt
\hskip -15pt
\begin{minipage}{.4\textwidth}
			\includegraphics[width=1\textwidth]{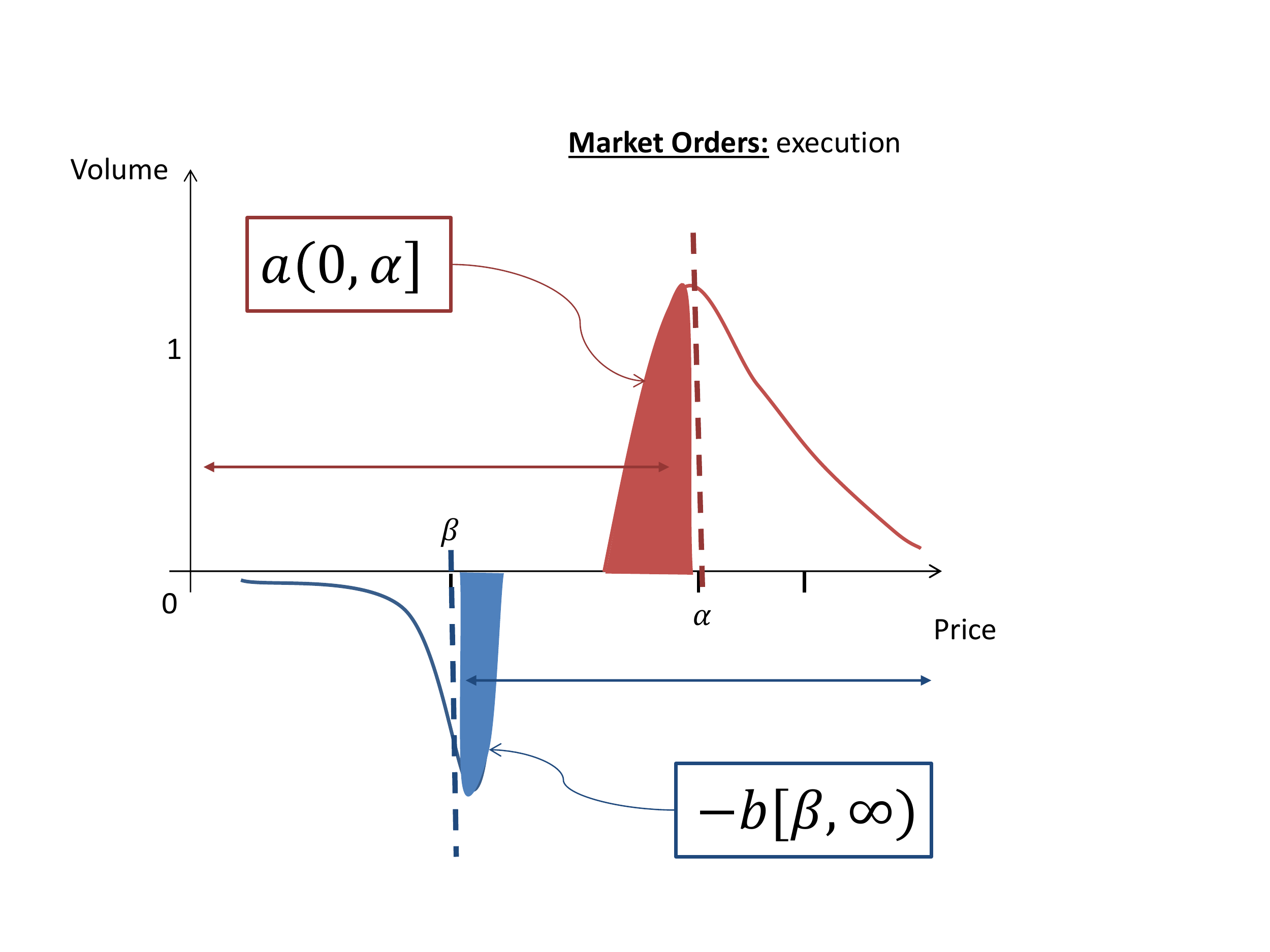}
\end{minipage}
\begin{minipage}{.6\textwidth}
For the justification of this formula let us first consider a single bid $b = \delta_p$. That is, the provider expresses interest in \emph{buying} one unit of the asset at the price $p$ or lower. A liquidity taker's market order to \emph{sell} will therefore execute the order if and only if its price level $\beta$ is \emph{smaller}. Should such an execution take place, the liquidity provider gains one unit of volume and loses $p$ units of cash. The above formula is then obtained by aggregating linearly the individual limit orders.
\end{minipage}
\vskip 5pt

The following assumptions will be used throughout the section.
\begin{hypothesis}
\label{assumption1}
The order books $(b,a)$ are such that $\bar{b}<\underline{a}$, that is, the bid-ask spread is always positive. We will say in this case that the order book \emph{exhibits no arbitrage}.
\end{hypothesis}

\begin{hypothesis}
\label{assumption2}
It is never optimal for the liquidity taker to buy and sell simultaneously. 
\end{hypothesis}
In particular, we can recode the liquidity taker's control by a single real number $\alpha$ by making him formally send the market orders $(\alpha, \alpha)$. Indeed, if $\alpha \in (\bar{b}, \underline{a})$ there is no trade, if $\alpha \ge \underline{a}$ a buy happens but no sell, and similarly for $\alpha \le \bar{b}$.

\subsubsection{\textbf{A probabilistic model for liquidity taker behavior}}\label{sub_sec_behavior}
We now provide a simple model for which Assumption \ref{assumption2} follows automatically from  Assumption \ref{assumption1}. 
Let $(\Omega, \F, \P)$ be a probability space modeling the beliefs of the liquidity taker. Let $p$ be a random variable representing the price at which the liquidity taker values the asset at a future time. We assume that the liquidity taker is risk-neutral under $\P$ in the sense that he maximizes his expected wealth after the trade, in other words he solves the optimization problem:
\begin{equation}
\max_{\beta,\alpha} \E \left[ - p \Delta L - \Delta K\right] \label{opt_taker}
\end{equation}

\begin{proposition}\label{taker_sol}
Let the order book $(a,b)$ be given. Then an optimal trade for the liquidity taker is given by
\begin{equation}
\beta = \alpha = \E[p].
\end{equation}
\end{proposition}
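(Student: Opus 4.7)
Plan for the proof. The taker chooses $(\beta,\alpha)$ deterministically, so $\Delta L$ and $\Delta K$ defined by \eqref{atomic_asset_change}--\eqref{atomic_cash_change} are deterministic functionals of the controls. Consequently the taker's objective \eqref{opt_taker} can be rewritten, with $\pi := \E[p]$, as
\begin{equation*}
\E[-p\Delta L - \Delta K] = -\pi\,\Delta L - \Delta K.
\end{equation*}
Substituting the definitions of $\Delta L$ and $\Delta K$ and regrouping terms price by price gives the key decomposition
\begin{equation*}
-\pi\,\Delta L - \Delta K \;=\; \int_{[\beta,\infty)} (x-\pi)\,b(dx) \;+\; \int_{(0,\alpha]} (\pi-x)\,a(dx),
\end{equation*}
so the problem splits into two \emph{independent} scalar optimizations, one over $\beta$ (governing which bids are hit) and one over $\alpha$ (governing which asks are lifted).

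The first step I would carry out is this rewriting, as it is the only nontrivial algebraic move and makes the rest essentially a pointwise observation. Next, I would optimize each integral separately. In the $\beta$-integral the integrand $x-\pi$ is nonnegative precisely on $\{x\ge\pi\}$, so any choice of $\beta$ that hits all bids at prices $\geq\pi$ and no bids at prices $<\pi$ is optimal; $\beta=\pi$ does this. Symmetrically, the integrand $\pi-x$ in the $\alpha$-integral is nonnegative exactly on $\{x\le\pi\}$, so $\alpha=\pi$ is optimal. (Mass that $b$ or $a$ places at the single point $\pi$ contributes zero to either integral, so boundary conventions are immaterial.) This also shows $\beta=\alpha=\E[p]$ is an optimizer regardless of atoms in the measures.

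Finally I would verify consistency with Assumption \ref{assumption2}. Because of no arbitrage (Assumption \ref{assumption1}), $\bar b<\underline a$, so with $\beta=\alpha=\pi$ there are three cases: if $\pi\in(\bar b,\underline a)$ neither integral contains any mass and no trade occurs; if $\pi\ge\underline a$ then $\pi>\bar b$, so the bid integral is zero and only a buy order executes; if $\pi\le\bar b$ then $\pi<\underline a$, so the ask integral is zero and only a sell order executes. In each case no simultaneous buy and sell takes place, so the candidate $\beta=\alpha=\E[p]$ lives in the admissible control set.

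I do not anticipate a serious obstacle: the proof is essentially the observation that under risk neutrality the problem is linear in the measures and decouples side by side, reducing to the elementary one-dimensional threshold selection $\int_{[\beta,\infty)}(x-\pi)b(dx)$. The only mild subtlety is bookkeeping at the threshold $x=\pi$, and since the integrand vanishes there it does not affect optimality. The concavity/uniqueness discussion (beyond existence of \emph{an} optimum, which is what the proposition asserts) would be the only place where one would need to be slightly careful, but the statement only claims that $\beta=\alpha=\E[p]$ \emph{is} optimal, not uniquely so.
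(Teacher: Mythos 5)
Your proof is correct and follows essentially the same route as the paper: rewrite the risk-neutral objective as the decoupled sum $\int_{[\beta,\infty)}(x-\E[p])\,b(dx)+\int_{(0,\alpha]}(\E[p]-x)\,a(dx)$ and note that each integral is maximized by taking the threshold at $\E[p]$, where the integrand changes sign (the paper phrases this as monotonicity of each scalar map on either side of $\E[p]$). Your closing consistency check against Assumption \ref{assumption2} is a correct bonus; the paper handles that point in the corollary that follows the proposition.
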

\begin{proof}
The liquidity taker looks for the supremum over $(0,\infty)\times (0,\infty)$ of the function
\begin{equation}
(\beta, \alpha) \longmapsto \int_{[\beta,\infty)} (x-\E[p]) b(dx) - \int_{(0,\alpha]} (x-\E[p]) a(dx) 
\end{equation}
This function decouples and we are left maximizing
\begin{equation}
\beta \longmapsto  \int_{[\beta,\infty)} (x - \E[p]) b(dx)
\end{equation}
which is non-decreasing on $(0,\E[p]]$ and non-increasing on $[\E[p],\infty)$. The same result holds for
\begin{equation}
\alpha \longmapsto  -\int_{(0,\alpha]} (x - \E[p]) a(dx)
\end{equation}
The supremum is attained for $\beta = \alpha = \E[p]$.
\end{proof}

\begin{remark}
While we do not have uniqueness of this maximum, all the other choices of optimum market orders will lead to exactly the same executions. Indeed, the function $\beta \longmapsto  \int_{[\beta,\infty)} (x - \E[p]) b(dx)$ and $\alpha \longmapsto  -\int_{(0,\alpha]} (x - \E[p]) a(dx)$  respectively do not have a strict maximum in $\E[p]$ iff $b$ and $a$ respectively put zero mass on some interval including $\E[p]$. Any market orders on this interval will lead to exactly the same cash and asset transfers and we can without loss of generality replace them by market orders at $\E[p]$. A similar argument can be made to rule out partial orders. In particular, we can summarize the taker's market orders by a single number $\alpha$.
\end{remark}

\begin{corollary}
Assume the order book $(b,a)$ exhibits \emph{no arbitrage}. Then it is never optimal for the taker to buy and sell simultaneously.
\end{corollary}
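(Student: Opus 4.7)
The plan is to reduce the claim to a simple geometric contradiction between the no-arbitrage condition $\bar b<\underline a$ and what a simultaneous buy-and-sell would require at any optimum.

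First, I would unpack what ``buying'' and ``selling'' mean under the single-number reparametrization introduced after Assumption \ref{assumption2}: the taker buys iff $\alpha\ge\underline a$ (so that $a(0,\alpha]>0$) and sells iff $\beta\le\bar b$ (so that $b[\beta,\infty)>0$). A simultaneous buy and sell therefore requires $\beta\le\bar b<\underline a\le\alpha$, and in particular $\beta<\alpha$.

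Next I would invoke Proposition \ref{taker_sol}, which exhibits the distinguished optimum $\beta=\alpha=\E[p]$. For this optimum the equality $\beta=\alpha$ makes the preceding strict inequality $\beta<\alpha$ impossible, so the chain $\beta\le\bar b<\underline a\le\alpha$ cannot hold. Splitting into the three cases $\E[p]\le\bar b$, $\bar b<\E[p]<\underline a$, and $\E[p]\ge\underline a$ gives respectively a pure sell, no trade, and a pure buy — never both — which handles the distinguished maximizer.

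The only genuinely subtle point, which I expect to be the main obstacle, is that the maximizer is not necessarily unique: the corollary is stated for \emph{every} optimal trade, not just the canonical one. Here I would quote the observation in the Remark following Proposition \ref{taker_sol}: any other pair $(\beta',\alpha')$ maximizing \eqref{opt_taker} produces the \emph{same} executed quantities $\Delta L$ and $\Delta K$ as $(\E[p],\E[p])$, because the functions $\beta\mapsto\int_{[\beta,\infty)}(x-\E[p])b(dx)$ and $\alpha\mapsto-\int_{(0,\alpha]}(x-\E[p])a(dx)$ can only fail to have a strict maximum at $\E[p]$ on intervals where $b$, resp.\ $a$, place no mass. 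Thus if $(\beta',\alpha')$ were to generate a simultaneous buy and sell, the canonical optimum $(\E[p],\E[p])$ would have to generate the same buy and sell — contradicting the previous paragraph. This yields the corollary.
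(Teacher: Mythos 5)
Your proof is correct and follows essentially the same route as the paper: both arguments reduce, via Proposition \ref{taker_sol} and the subsequent remark on non-uniqueness, to a single market-order level $\alpha=\E[p]$, and then observe that a simultaneous buy and sell would force $\alpha\ge\underline{a}$ and $\alpha\le\bar{b}$, contradicting the no-arbitrage condition $\bar{b}<\underline{a}$. Your explicit treatment of the non-unique-maximizer case is a slightly more careful write-up of the same step the paper delegates to that remark.
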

\begin{proof}
By the previous comment, we can summarize the market orders of a taker behaving optimally by a single real $\alpha$. The taker's buy and sell volumes are 
\begin{equation}
a[\underline{a},\alpha] \; \textit{and} \quad  b[\alpha,\bar{b}] 
\end{equation}
The no arbitrage property implies that these two terms cannot both be positive.
\end{proof}

\subsubsection{Alternative representation of the order book}
Even though the above representation of limit and market orders is clear, we still present an alternative description which only makes sense if no arbitrage is present on the market and Assumption \ref{assumption2} is verified.

The below definitions correspond to a very intuitive `graphic' approach. In the previous section, we have defined the order book as a pair of positive measures $(b,a)$. The no-arbitrage condition guarantees that these two measures have disjoint supports. One is therefore tempted to `glue' the two measures together into one. But in order to do that, we also need to keep track of where the offers starts and the bids stop. This is done in the following way.

\begin{definition}[Quoted price]
Let $(b,a)$ be an order book that does not exhibit arbitrage. Then we say that $p$ is a \emph{quoted price} of the order book if $p \in (\bar{b},\underline{a})$.
\end{definition}

Because the bid-ask spread is positive, there is not a unique quoted price. This is an unfortunate reality of high frequency markets, and we will only be able to mathematically resolve this difficulty in the limit where the bid-ask spread vanishes. Using a quoted price as a separation point between bid and ask limit orders, we can define:

\begin{definition}[Shape function]
Let $(b,a)$ be an order book that exhibits no arbitrage and $p$ be one of its quoted prices. Then define the order book's \emph{shape} function $\gamma: \R \longmapsto [0,\infty)$ to be
\begin{equation}
\gamma(u) = \int_0^u \left(a(0, p + x] -b[p + x,\infty)\right)dx. 
\end{equation}
In particular, $\gamma$ is convex, $\gamma(0)=0$ and $\gamma'(0) =0$. Moreover, $\gamma'$ is bounded and as a result, $\gamma$ has at most linear growth. \end{definition}

\begin{remark}
Notice that $\gamma''(\cdot + p) = b + a$  if both measures $b$ and $a$ have densities, or more generally, if we understand this equality in the sense of distributions.
\end{remark}

\begin{figure}[htbp]
	\centering
		\includegraphics[width=1.00\textwidth]{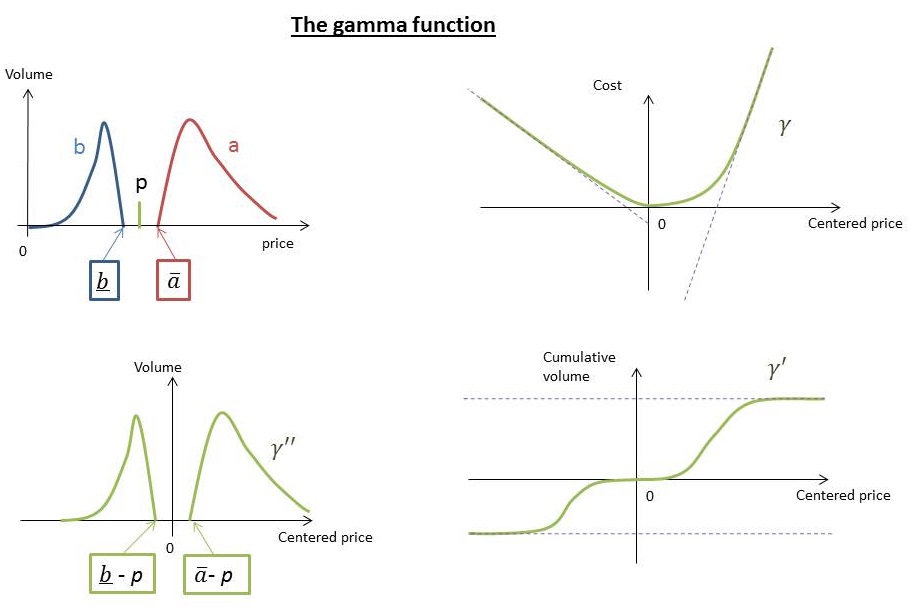}
	\label{fig:gamma}
\end{figure}

The following result recasts the trade equations in terms of the function $\gamma$.

\begin{proposition}\label{dual_formulas}
Let $(b,a)$ be an order book which exhibits no arbitrage, $p$ be one of its quoted prices and $\gamma$ the associated shape function. If $\alpha = u + p$ is the liquidity taker's market order, then we have
\begin{align}
\Delta L &= - \gamma'(u)\label{fo:deltaL}\\
\Delta K &= (u + p)\gamma'(u) - \gamma(u).\label{fo:deltaK}
\end{align}
\end{proposition}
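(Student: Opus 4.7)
The strategy is to derive both identities directly from the definition of $\gamma$ combined with Fubini's theorem, using the no-arbitrage condition $\bar b < p < \underline a$ to force the vanishing identities $a(0,p] = 0$ and $b[p,\infty) = 0$. These mean that, around the quoted price $p$, the measures $a$ and $b$ live on disjoint half-lines, so many of the terms in the ensuing calculations collapse.

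By the fundamental theorem of calculus applied to the integral defining $\gamma$, one has $\gamma'(u) = a(0, p+u] - b[p+u, \infty)$ (at every continuity point of the integrand, and as a right-derivative in general). By Assumption~\ref{assumption2} and the reduction in Section~\ref{sub_sec_behavior}, the taker's single control is $\alpha = \beta = p + u$, so substituting into \eqref{atomic_asset_change} gives
\[
\Delta L = b[p+u, \infty) - a(0, p+u] = -\gamma'(u),
\]
which is \eqref{fo:deltaL}.

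For \eqref{fo:deltaK}, I would treat $u \ge 0$ in detail and handle $u \le 0$ by the obvious symmetry exchanging the roles of $a$ and $b$. When $u \ge 0$, no arbitrage forces $b[p+u, \infty) = 0$, so the cash equation \eqref{atomic_cash_change} reduces to $\Delta K = \int_{(0, p+u]} x \, a(dx)$. Applying Fubini to the $a$-part of $\gamma$ and using $a(0,p]=0$,
\[
\int_0^u a(0, p+x]\,dx = \int_{(p,\infty)} \int_0^u \mathbf{1}_{\{x \ge y - p\}}\,dx\,a(dy) = \int_{(p, p+u]} (u+p-y)\,a(dy),
\]
while the analogous $b$-integral vanishes on $[0,u]$ since $\mathrm{supp}(b) \subset (0,p)$. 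Consequently
\[
(u+p)\gamma'(u) - \gamma(u) = (u+p)\,a(p, p+u] - \int_{(p, p+u]} (u+p-y)\,a(dy) = \int_{(p, p+u]} y\,a(dy),
\]
which by the same vanishing identity equals $\int_{(0, p+u]} y\,a(dy) = \Delta K$, establishing \eqref{fo:deltaK}.

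The only real obstacle is bookkeeping rather than analysis: one must be careful about atoms of $a$ and $b$ (where $\gamma'$ may fail to exist at isolated points, which is harmless since the identities hold in the sense of left/right derivatives), and about sign conventions when $u \le 0$, where the integration bounds become $[u,0]$ and the roles of the two measures interchange. Both are routine once the $u \ge 0$ calculation is written out carefully.
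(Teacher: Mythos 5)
Your proof is correct and follows essentially the same route as the paper's: the first identity is read off from the definition of $\gamma$, and the second comes from converting the Stieltjes integrals $\int x\,a(dx)$ and $\int x\,b(dx)$ into integrals of the corresponding distribution functions --- the paper phrases this as integration by parts, you as Fubini, which is the same computation. Your case split on the sign of $u$, with no-arbitrage used to kill one of the two measures in each case, is a harmless reorganization of the paper's single symmetric calculation.
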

\begin{proof}
The first identity is immediate from the definition of $\gamma$ and $\Delta L$:
\begin{align*}
\Delta L &= b[\alpha,\infty) - a(0,\alpha] \\
         &= - \gamma'(u)
\end{align*}
The second identity follows using integration by parts:
\begin{align*}
\Delta K &= \int_{(0,\alpha]} x a(dx) -\int_{[\alpha,\infty)} x b(dx) \\
         &= \alpha\, a(0,\alpha] - \int_{(0,\alpha]} a(0,x] dx - \alpha\, b[\alpha,\infty) + \int_{[\alpha,\infty)}  b[x,\infty)dx \\
				 &= \alpha \gamma'(u) - \gamma(u)
\end{align*}
\end{proof}

\begin{remark}
The liquidity provider's change in portfolio is captured by the pair $(\Delta L, \Delta K)$ comprising her inventory and cash positions. There are multiple ways to denote her change in \emph{wealth}. But if there is no price recovery, then the change in price of the asset after the transaction would be $\alpha- p$, and we have:
\begin{align*}
\Delta X &= (\alpha - p) \Delta L + \Delta K \\
         &= -\gamma(u)
\end{align*} 
for the transfer of wealth from the liquidity taker to the liquidity provider. Notice that we used \eqref{fo:deltaL} and \eqref{fo:deltaK} to deduce the second equality.
$\Delta X$ is always non-positive by construction of $\gamma$, and minimal at the quoted price used to define $\gamma$. As a result, the shape function can be seen as a measure of \emph{adverse selection} the liquidity provider is willing to incur at a given price level if price recovery were non-existent.
\end{remark}

To relate the transaction costs back to the traded volume without going through the transaction price $\alpha$, we use the following result:
\begin{proposition}{(Transaction costs)}
Define the \emph{transaction cost function} $c$ as the Legendre transform of $\gamma$:
\begin{equation}
c(l) = \sup_{u} \left(u l - \gamma(u)\right).
\end{equation}
Then we have: 
\begin{equation}
\Delta K =  - p \Delta L + c\left(\Delta L\right),
\end{equation}
and in particular, $c$ is convex and satisfies $c(0)=0$.
\end{proposition}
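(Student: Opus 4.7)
The plan is to reduce the identity to an algebraic manipulation with the help of Proposition \ref{dual_formulas}, then identify the resulting expression with the Legendre transform of $\gamma$, and finally verify the two ancillary properties.

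First I would substitute the formulas of Proposition \ref{dual_formulas} into the claim. Writing $\Delta L = -\gamma'(u)$ and $\Delta K = (u+p)\gamma'(u) - \gamma(u)$, a direct computation gives
\[\Delta K + p\Delta L = u\gamma'(u) - \gamma(u),\]
so the identity reduces to showing that this quantity is the value of $c$ at $\Delta L$.

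Next I would invoke Legendre duality. The function $\gamma$ is convex with $\gamma(0) = \gamma'(0) = 0$ and a bounded, non-decreasing derivative. Hence for any $l$ in the range of $\gamma'$, the supremum in $c(l) = \sup_v(vl - \gamma(v))$ is attained at a $v^{\ast}$ characterized by the first-order condition $\gamma'(v^{\ast}) = l$, with value $v^{\ast} l - \gamma(v^{\ast})$. Matching this optimizer to the trade parameter $u$ through the relation $\Delta L = -\gamma'(u)$ (so that in the appropriate sign convention the trade variable $u$ realizes the supremum) produces exactly $u\gamma'(u) - \gamma(u)$, completing the identity. On the flat zero region of $\gamma$ around $u = 0$ corresponding to the bid-ask spread the optimizer is non-unique, but every selection yields the same value, so the identity extends there as well.

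The remaining assertions are immediate. Convexity of $c$ follows from its definition as a pointwise supremum of the affine functions $l \mapsto vl - \gamma(v)$. The normalization
\[c(0) = \sup_v(-\gamma(v)) = -\inf_v \gamma(v) = -\gamma(0) = 0\]
holds because $\gamma \ge 0$ with minimum attained at $v = 0$. The main delicate point is the second step: one must carefully pair the Legendre optimizer with the trade parameter $u$ in light of the minus sign in $\Delta L = -\gamma'(u)$, and check that the non-uniqueness of the optimizer on the flat region of $\gamma$ (the bid-ask spread) does not affect the value of the supremum.
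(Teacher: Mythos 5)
Your overall route is the same as the paper's: reduce the claim to $\Delta K + p\Delta L = u\gamma'(u)-\gamma(u)$ and recognize the right-hand side as a value of the Legendre transform via the first-order condition (this is exactly the Fenchel identity $u\gamma'(u)=\gamma(u)+c(\gamma'(u))$ that the paper invokes), and your verification of convexity and of $c(0)=-\inf\gamma=0$ is correct.

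However, the step you yourself flag as ``the main delicate point'' is not actually resolved, and as written it is false. Under the convention $\Delta L=-\gamma'(u)$ of Proposition \ref{dual_formulas}, the trade parameter $u$ does \emph{not} realize the supremum defining $c(\Delta L)=\sup_v\left(v\,\Delta L-\gamma(v)\right)$: plugging $v=u$ in gives $-u\gamma'(u)-\gamma(u)\le 0$, whereas $u\gamma'(u)-\gamma(u)\ge 0$. The first-order condition $\gamma'(v^\ast)=l$ with $l=\Delta L=-\gamma'(u)$ places the optimizer $v^\ast$ on the \emph{opposite} side of the book from $u$, and in an asymmetric book $\Delta L$ may even lie outside the closed range of $\gamma'$, in which case $c(\Delta L)=+\infty$. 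What $u$ does realize is the supremum defining $c(-\Delta L)=c(\gamma'(u))$, so the identity that actually follows from your computation is $\Delta K=-p\Delta L+c(-\Delta L)$; the stated form $c(\Delta L)$ requires either that $\gamma$ be even, or the opposite sign convention $\Delta L=+\gamma'(u)$ (in which case the cash term becomes $+p\Delta L$). To be fair, the paper's own proof commits the same slip --- it writes ``as $\Delta L=\gamma'(u)$,'' contradicting equation \eqref{fo:deltaL} --- so the inconsistency is inherited from the source; but since you explicitly identified the sign pairing as the crux, you need to either carry the minus sign through honestly (and state the result as $c(-\Delta L)$) or declare and justify the sign convention under which $u$ is the genuine optimizer, rather than appeal to ``the appropriate sign convention.''
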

\begin{proof}
By the Fenchel identity, we have that
\begin{equation*}
u \gamma'(u)  = \gamma(u) + c(\gamma'(u))
\end{equation*}
and that $c'$ is the generalized inverse of $\gamma'$. Hence, as $\Delta L = \gamma'(u)$ we have that $u = c'(\Delta L)$ and
\begin{align*}
\Delta K &= - p \Delta L + u \gamma'(u)  - \gamma(u) \\
         &= - p \Delta L + c(\Delta L)
\end{align*}
\end{proof}

An order book $(b,a)$ which \emph{does not exhibit arbitrage} can therefore be represented by a pair $(p,\gamma)$ with $p$ a real and $\gamma$ a differentiable, convex function with linear growth satisfying $\gamma(0) = \gamma'(0) = 0$. Note that this representation in terms of \emph{quoted price} and order book \emph{shape} is \emph{not unique}, but leads to a completely equivalent description of trades and hence the same market model. 

Both representations have pros and cons and unfortunately, both will need to be juggled at different times of our analysis. The advantages of the original $(b,a)$ representation are: \emph{uniqueness} of the decomposition, ease to derive \emph{no-arbitrage relationships} and \emph{natural} interpretation of formulas. The alternative representation in terms of $(p,\gamma)$ is \emph{more tractable} and \emph{concise} as it involves a real number and a function rather than a pair of measures.

\subsubsection{Summary}\label{subsec_summary_micro}
For future reference, we summarize the different trade equations and market representations defined and derived in this section. 

The liquidity provider places limit orders. If the \emph{limit order book} formed that way presents no arbitrage, it will be represented either by a pair of measures $(b,a)$ or a couple $(p,\gamma)$ with $p$ a real number and $\gamma$ a differentiable, convex function with linear growth and $\gamma(0) = \gamma'(0) = 0$. Consistency equations between the two representations can be found above. 

We call $(b,a)$ the \emph{order book}, $p$ a \emph{quoted price} and $\gamma$ the \emph{shape} of the order book. The liquidity taker's \emph{market order} will be represented either by a real $\alpha$ representing a \emph{price}, or a real $u$ denoting a \emph{centered} price (shifted by the quoted price $p$). Both representations lead to the same trades.
\begin{align*}
\Delta L &= b[\alpha,\infty) - a(0,\alpha] \\
         &= - \gamma'(u)
\end{align*} 
is the change in inventory of the liquidity provider, while
\begin{align*}
\Delta K &= \int_{(0,\alpha]} x a(dx) -\int_{[\alpha,\infty)} x b(dx) \\
				 &= (u + p) \gamma'(u) - \gamma(u) \\
				 &= p \Delta L + c\left(\Delta L\right)
\end{align*}
is her change in cash position.

The market order corresponding to this trade can be recovered from the limit orders and the trade volume by the relationship
\begin{equation}
\alpha - p = c'(-\Delta L)
\end{equation}
and this is the price impact in the absence of price recovery.

\subsection{Discrete self-financing equation and other relationships}
We now give ourselves a discrete price process $p$ and provider inventory process $L$. Just as in the bid-ask spread case, three necessary conditions can be derived.
\subsubsection{Self-financing equation}
\begin{equation}
\Delta X = L \Delta p + c\left(\Delta L\right) + \Delta p \Delta L
\end{equation}
\subsubsection{Price impact}
\begin{equation}
\Delta p \Delta L \le 0
\end{equation}
\subsubsection{Price recovery}
\begin{equation}
|\Delta p| \le |c'(-\Delta L)|
\end{equation}

\subsection{Macroscopic limit}
The strategy in this section is identical to that of section \ref{sec:bid-ask}. We start off with the data of our problem in continuous time, discretize it to apply the discrete relationships derived earlier and finally take the diffusion limit to obtain our continuous time relationships.

\subsubsection{Approximation procedure}
Let $\left(\Omega, \F, \FF, \P\right)$ be a filtered probability space supporting a Wiener process $(W,W')$ with unspecified correlation structure. We consider a fixed time interval $[0,1]$ and give ourselves the following $\F$-adapted processes for the price and inventory of a provider:
\begin{equation}
	\begin{cases}
		p_t &= p_0 + \int_0^t \mu_u du + \int_0^t \sigma_u dW_u \\
		L_t &= L_0 + \int_0^t b_u du + \int_0^t l_u dW'_u
	\end{cases}
\end{equation}
where $p_0$ and $L_0$ are $\F_0$-measurable elements of $L^2$ and $\mu$, $\sigma$, $b$ and $l$ are $\F$-adapted and c\`adl\`ag processes. Finally, let $c : \Omega \times [0,1] \times \R \rightarrow \R^d$ be a random, $\F_t$-adapted function that is $C^0$ in $(t,l)$. Assume $c$ to be a.s. convex for all $t$, with a minimum at $c_t(0)=0$ and such that $c_t(l) < C l^2$ for some constant $C$. We denote by $\gamma_t$ its Legendre transform, which will represent the shape function of the order book as measured in tick size.

Let $\frac{1}{\sqrt{N}}$ be a vanishing tick size. Define the discretized price process as $p^N_n =p_{n/N}$ and likewise $L^N$.

\begin{figure}
	\centering
		\includegraphics[width=1.00\textwidth]{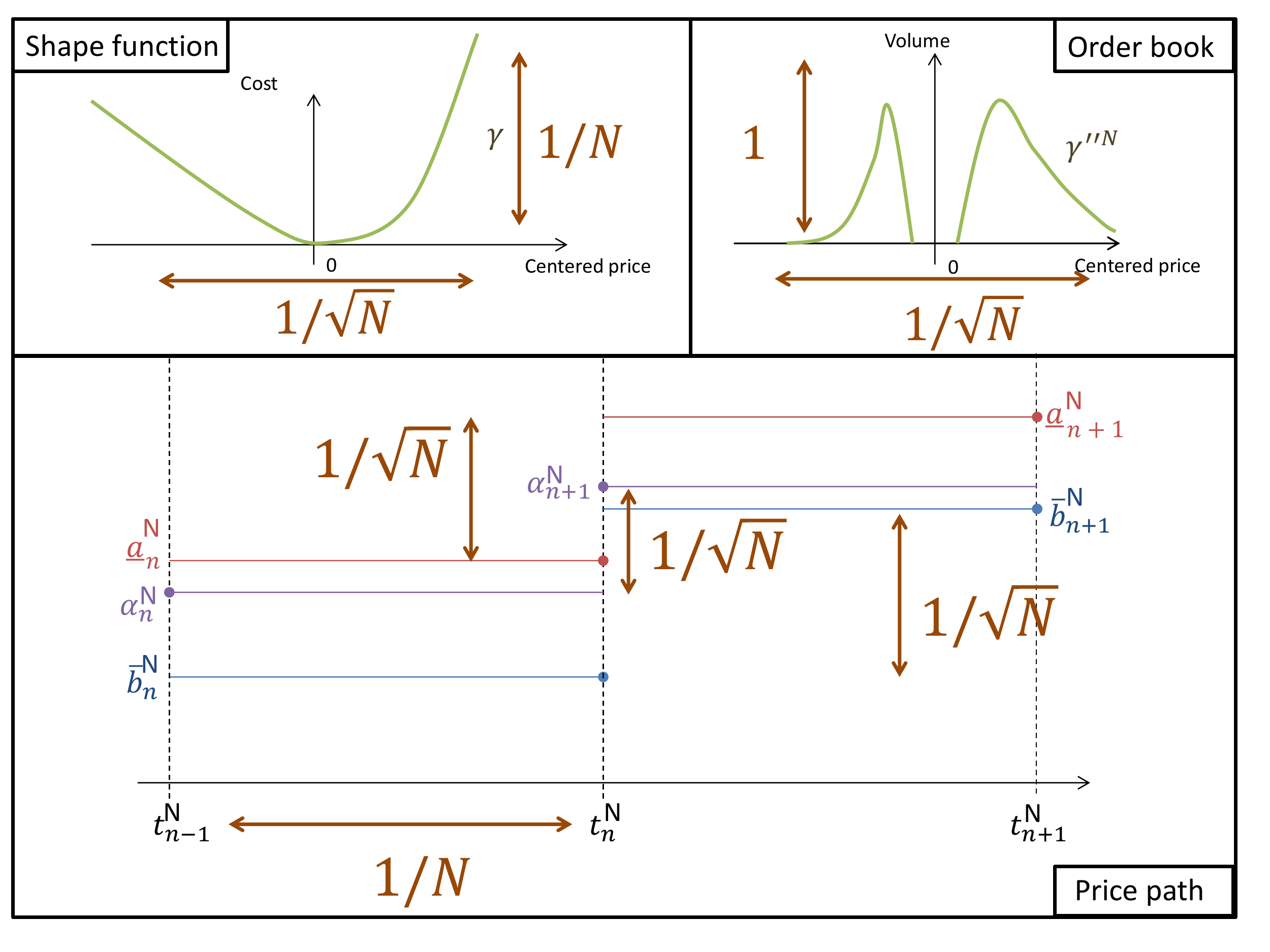}
	\caption{Renormalization of the model for the diffusion limit. Time is scaled by $1/N$, prices by $1/\sqrt{N}$ and volume by $1$ (unchanged). For example, the $y$-axis of $\gamma$ represents cost, that is $[\textit{volume}]\cdot[\textit{price}]^2$ which scales in $1/N$. The $x$-axis is expressed in prices and is scaled in $1/\sqrt{N}$, leading to the formula $\gamma^N(\cdot) = \gamma(\sqrt{N} \cdot)/N$.}
	\label{fig:Homogeinization} 
\end{figure}

We propose the following choice of renormalization for the order book.
\begin{equation}
\gamma^N_n(x) = \frac{1}{N} \gamma_{n/N}\left(\sqrt{N} x\right)
\end{equation}
This in particular implies 
\begin{equation}
c^N_n(l) = \frac{1}{N} c_{n/N}\left(\sqrt{N} l\right)
\end{equation}
This follows from the fact that $\gamma$ is defined in tick size and needs to be renormalized appropriately in the discrete approximation, where we want $\gamma^N$ to be expressed in absolute terms.

\subsubsection{Main result}
\begin{theorem}
The continuous time relationships between provider wealth $X$, inventory $L$, price $p$ and transaction costs $c$ are:
\begin{equation}
	\begin{cases}
		dX_t = L_t dp_t + \Phi_{l_t}(c_t) dt + d[L,p]_t \\
		d[L,p]_t \le 0 \\
		\sigma^2_t \le \Phi_{l_t}((c'_t)^2)
	\end{cases}
\end{equation}
where $X_t = \lim_{N\rightarrow\infty} X^N_{\lfloor Nt\rfloor}$ u.c.p.
\end{theorem}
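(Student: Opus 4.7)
The plan is to mirror the strategy of Section \ref{sec:bid-ask} exactly, with the linear-in-tick-size cost $(s_t/2)|y|$ replaced by the general convex cost $c_t(y)$, and to invoke Theorem \ref{thm_Jacod} twice: once on the self-financing term and once on the price recovery inequality. As a preliminary step, using a localizing sequence of stopping times, I would assume without loss of generality that $\mu,\sigma,b,l$, together with $c_t$, $c'_t$ and $(c'_t)^2$, are uniformly bounded in $t$, uniformly on compacts in the spatial variable. The convexity of $c_t$ with $c_t(0)=0$ and the growth bound $c_t(l)\le Cl^2$ also yields, through the standard convexity inequality $lc'_t(l)\le c_t(2l)-c_t(l)\le 4Cl^2$, the linear growth $|c'_t(l)|\le 4C|l|$ and in turn the quadratic growth $(c'_t(l))^2\le 16C^2l^2$, which is exactly what Theorem \ref{thm_Jacod} requires.

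Next I would dispose of the two terms in the discrete self-financing identity that do not require Jacod--Protter. The Riemann sum $\sum_{n<\lfloor Nt\rfloor}L^N_n\Delta_np^N$ converges u.c.p.\ to $\int_0^tL_s\,dp_s$ by the standard approximation theorem for It\^o integrals driven by continuous adapted integrands, and $\sum_{n<\lfloor Nt\rfloor}\Delta_np^N\Delta_nL^N$ converges u.c.p.\ to $[L,p]_t$ by the corresponding result for quadratic covariations. For the remaining term, the renormalization $c^N_n(l)=\frac{1}{N}c_{n/N}(\sqrt{N}\,l)$ gives
\begin{equation*}
\sum_{n=0}^{\lfloor Nt\rfloor-1}c^N_n(\Delta_nL^N)=\frac{1}{N}\sum_{n=0}^{\lfloor Nt\rfloor-1}c_{n/N}\bigl(\sqrt{N}\,\Delta_nL^N\bigr),
\end{equation*}
and I would apply Theorem \ref{thm_Jacod} with $F_t(y)=c_t(y)$ and $Y_t=L_t$, obtaining convergence u.c.p.\ to $\int_0^t\int c_s(y)\phi_{l^2_s}(y)\,dy\,ds=\int_0^t\Phi_{l_s}(c_s)\,ds$. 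Summing the three limits shows that $X^N_{\lfloor Nt\rfloor}$ converges u.c.p.\ and identifies the limit $X_t$ as the process solving the first equation. The second relation $d[L,p]_t\le 0$ follows by passing to the limit in $\sum_{n<\lfloor Nt\rfloor}\Delta_nL^N\Delta_np^N\le 0$ and the fact that the discrete sum converges to $[L,p]_t$, which is continuous and non-increasing.

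For the third relation I would work with the renormalized price recovery bound. Differentiating $c^N_n(l)=\frac{1}{N}c_{n/N}(\sqrt{N}\,l)$ gives $(c^N_n)'(l)=\frac{1}{\sqrt{N}}c'_{n/N}(\sqrt{N}\,l)$, so the microscopic recovery condition $|\Delta_np^N|\le|(c^N_n)'(-\Delta_nL^N)|$ rescales into the useful form $(\sqrt{N}\Delta_np^N)^2\le(c'_{n/N}(-\sqrt{N}\Delta_nL^N))^2$. Summing $\tfrac{1}{N}$ times this inequality from $\lfloor Nt_1\rfloor$ to $\lfloor Nt_2\rfloor$, the left-hand side converges u.c.p.\ to $\int_{t_1}^{t_2}\sigma_s^2\,ds$ (quadratic variation of $p$), while Theorem \ref{thm_Jacod} applied with $F_t(y)=(c'_t(y))^2$ (which is even, so the sign flip inside is harmless) and $Y_t=L_t$ makes the right-hand side converge u.c.p.\ to $\int_{t_1}^{t_2}\Phi_{l_s}((c'_s)^2)\,ds$. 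Since the inequality is preserved in the limit for every $t_1<t_2$ and the integrands are continuous, we conclude $\sigma^2_t\le\Phi_{l_t}((c'_t)^2)$.

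The main obstacle is verifying the hypotheses of Theorem \ref{thm_Jacod} for the quadratic functional $F_t(y)=(c'_t(y))^2$: the growth condition is not directly assumed but, as observed in the localization step, is a consequence of the assumed quadratic bound on $c_t$ together with convexity and $c_t(0)=0$; the joint continuity of $(t,y)\mapsto(c'_t(y))^2$ is inherited from the standing assumption on $c$ (after localization, $c_t$ is locally uniformly Lipschitz in $t$, so $c'_t$ is continuous in $t$ for a.e.\ $y$, and the convexity upgrades this to joint continuity in the interior of the effective domain). Modulo this careful handling of $c'_t$, the rest of the argument is a direct transcription of the proof in the bid-ask case.
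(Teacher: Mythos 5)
Your proposal is correct and follows essentially the same route as the paper, whose proof simply declares the two convergences to be a ``direct application'' of Theorem \ref{thm_Jacod}; your added verifications (localization, the convexity bound $|c'_t(l)|\le 4C|l|$ giving the required quadratic growth for $(c'_t)^2$, and the rescaling of the recovery inequality) are exactly the details the paper leaves implicit. One small inaccuracy: $(c'_t(y))^2$ need not be even for a general convex $c_t$, but the sign flip in $c'_{n/N}(-\sqrt{N}\Delta_n L^N)$ is still harmless because the Gaussian weight in $\Phi_{l_t}$ is symmetric, so $\Phi_{l_t}\bigl((c'_t)^2(-\cdot)\bigr)=\Phi_{l_t}\bigl((c'_t)^2\bigr)$.
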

\begin{proof}
Just as in the bid-ask spread case, the result to prove is the u.c.p. convergence of 
\begin{equation}
\frac{1}{N}\sum_{n=1}^{\lfloor tN\rfloor} c_{n/N}\left(\sqrt{N} \Delta_n L^N\right)
\end{equation}
and 
\begin{equation}
\frac{1}{N}\sum_{n=\lfloor t_1 N\rfloor}^{\lfloor t_2 N\rfloor} \left(\Delta_n p^N\right)^2 - \left(c'_{n/N}\left(\sqrt{N} \Delta_n L^N\right)\right)^2
\end{equation}
to the integrals
\begin{equation}
\int_0^t \Phi_{l_u}(c_u) du
\end{equation}
and
\begin{equation}
\int_{t_1}^{t_2} \left(\sigma^2_u - \Phi_{l_u}((c'_u)^2)\right) du
\end{equation}
This is a direct application of theorem \ref{thm_Jacod}.
\end{proof}

\section{\textbf{Naive supply and demand model}}
The aim of this section is to illustrate how a model for limit order fill rates and exact price recovery leads to models of the price as a function of trade volumes, or vice versa. This therefore models supply and demand in high frequency markets and closes the loop of our endeavor. However, we do not believe these models to be as accurate as the previously derived relationships and only use this section for illustrative purposes.

\subsection{Microscopic assumptions}
The proposed model is: \emph{perfect} fill rate and \emph{deterministic} price recovery.

\subsubsection{Disclaimer}
Unlike for the other microscopic relationships checked empirically in Section \ref{se:empirical}, the model considered now is \emph{not always consistent with empirical data}. Fill rates are definitely \emph{not} one, and price recovery is \emph{not} deterministic.

\subsubsection{Setup}
Let $(\Omega, \F, \P)$ be a probability space and $p$ and $L$ be two discrete time processes representing the price of the market and the inventory of a liquidity provider respectively. Let $\gamma$ be a $C^3$-function valued discrete time process representing our provider's shape function and $c$ its associated transaction costs.

\subsubsection{Additional relationships}
We translate 'perfect fill rate' and 'deterministic price recovery' by the following equation:
\begin{equation}
\Delta p = \lambda c'(-\Delta L)  \label{eq:supply_demand_p}
\end{equation}
or, equivalently
\begin{equation}
\Delta L = -\gamma'(\lambda^{-1} \Delta p) \label{eq:supply_demand_L}
\end{equation}
where $\lambda \in(0,1]$ is a real that encapsulates price recovery. The bigger $\lambda$, the smaller the price recovery.

\subsection{Macroscopic limit}
Equation \eqref{eq:supply_demand_p} allows a liquidity provider to derive the price from trade volumes and the order book, while equation \eqref{eq:supply_demand_L} derives the trade volumes from the prices and the order book. Both lead to the same consistency relationships between $p$, $L$ and $\gamma$ in the continuous limit.

\subsubsection{Main tool}
The proof method is based on another result from \cite{Jacod}. We first summarize the hypothesis and result before imposing them on the data of our problem.

Let $(\Omega, \F, \FF, \P)$ be a filtered probability space supporting an $1$-dimensional Wiener process $W$ and $Y$ be a $1$-dimensional  It\^o process of the form
\begin{equation}
Y_t = Y_0 + \int_0^t b_t dt + \int_0^t \sigma_t dW_t
\end{equation}
where we consider $t\in[0,1]$.

\begin{hypothesis}{(H)+ (K) from \cite{Jacod}}\label{J1}

Assume that $b_t$ and $\sigma_t$ are progressively measurable, $b_t$ is locally bounded and $\sigma_t$ is c\`adl\`ag.
\end{hypothesis}

Let now $F : \Omega \times [0,1] \times \R \rightarrow \R$ be a random, $\F_t$-adapted function that is $C^1$ in $y$ and $C^0$ in (t,y). We will shorten the notation to $y\mapsto F_t(y)$. Define the following assumption.

\begin{hypothesis}{(7.2.1), (10.3.2), (10.3.3), (10.3.4) and (10.3.7) from \cite{Jacod}}\label{J3}

Assume that a.s. for all $t$, $F_t$ is an \emph{odd} function in $y$.

Furthermore, assume there exists a function $g: \R \rightarrow \R$ with polynomial growth and a real $\beta > 1/2$ such that, for all $\omega \in \Omega$, $(t,s)\in[0,1]^2$ and $y \in \R$:
\begin{align*}
|F_t(y)| &\le g(y) \\
|F'_t(y)| &\le g(y) \\
|F_t(y) - F_s(y)| &\le g(y)|t-s|^\beta 
\end{align*}
\end{hypothesis}

Let us now state the new result from \cite{Jacod} we will use. 
\begin{theorem}{(10.3.2) from \cite{Jacod}}\label{thm_J2}
Assume \ref{J1} and \ref{J3}. Then there exists a very good filtered extension of the original space such that we have the following stable convergence in law as $N\rightarrow \infty$:
\begin{align*}
\frac{1}{\sqrt{N}}\sum_{n=1}^{\lfloor Nt \rfloor}& F_{n/N}\left({\sqrt{N}(X_{(n+1)/N} - X_{n/N})}\right) \rightarrow U_t
\end{align*}
where 
\begin{equation}
U_t = \int_0^t  b_s \Phi_{\sigma_s}\left(F'_s\right) ds  + \int_0^t \sqrt{\Phi_{\sigma_s}\left((F_s)^2\right)}	 dW'_s
\end{equation}
with $W'_t$ a  $d$-dimensional Wiener process such that
\begin{equation*}
[W',W]_t =  \int_0^t \frac{\Phi_{\sigma_s}\left( id \, F^k_s\right)}{\sigma_s \sqrt{\Phi_{\sigma_s}(F^k_s)^2}} ds
\end{equation*}
where $id$ is the identity function.
\end{theorem}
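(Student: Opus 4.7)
The normalization by $1/\sqrt{N}$ (rather than $1/N$ as in Theorem \ref{thm_Jacod}) is the decisive feature, and it is the assumption that $F_t$ is an \emph{odd} function of $y$ that makes this scaling the correct one: each summand $F_{n/N}(\sqrt{N}\Delta_n Y)$ is $O(1)$, but, conditionally on $\F_{n/N}$, its mean is small (it would vanish if $\sqrt{N}\Delta_n Y$ were exactly centered Gaussian with variance $\sigma^2_{n/N}$), so the sum is a near-martingale of order $\sqrt{Nt}$, and a CLT-type statement is the natural limit. My plan therefore is to treat the sum as a triangular array of $\F_{(n+1)/N}$-measurable random variables, split it into a compensator part (yielding the drift $\int_0^t b_s\Phi_{\sigma_s}(F'_s)\,ds$) and a martingale part (yielding the stochastic integral $\int_0^t \sqrt{\Phi_{\sigma_s}(F_s^2)}\,dW'_s$), and then invoke a functional stable central limit theorem for martingale arrays.

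First I would localize: by stopping times tamed to keep $b$, $\sigma$, $1/\sigma$ (away from $0$), and the polynomial envelope $g$ bounded, everything reduces to the uniformly bounded regime, justifying the interchanges of limits that follow. Next I would Taylor-expand, writing
\begin{equation*}
\sqrt{N}\,\Delta_n Y \;=\; \sigma_{n/N}\,\xi_n^N \;+\; \frac{b_{n/N}}{\sqrt{N}} \;+\; \varepsilon_n^N,
\qquad \xi_n^N:=\sqrt{N}\,\Delta_n W,
\end{equation*}
where $\varepsilon_n^N$ collects the fluctuations of $\sigma$ and $b$ over $[n/N,(n+1)/N]$. Using the $C^1$ regularity of $F_t$ in $y$ and the Hölder-in-$t$ bound with exponent $\beta>1/2$, one gets
\begin{equation*}
F_{n/N}(\sqrt{N}\Delta_n Y) \;=\; F_{n/N}(\sigma_{n/N}\xi_n^N) + \frac{b_{n/N}}{\sqrt{N}}\,F'_{n/N}(\sigma_{n/N}\xi_n^N) + \rho_n^N,
\end{equation*}
with $\E[|\rho_n^N|^2\mid\F_{n/N}]=o(1/N)$ uniformly, so that $\frac{1}{\sqrt{N}}\sum_n \rho_n^N\to 0$ u.c.p.

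Third, I analyze the two leading pieces. For the \textbf{drift} piece: conditionally on $\F_{n/N}$, $\xi_n^N$ is standard Gaussian, so taking conditional expectations gives
\begin{equation*}
\E\!\left[\frac{b_{n/N}}{\sqrt{N}} F'_{n/N}(\sigma_{n/N}\xi_n^N)\,\Big|\,\F_{n/N}\right] \;=\; \frac{b_{n/N}}{\sqrt{N}}\,\Phi_{\sigma_{n/N}}(F'_{n/N}),
\end{equation*}
whose Riemann sum after a further $1/\sqrt{N}$ scaling converges to $\int_0^t b_s\,\Phi_{\sigma_s}(F'_s)\,ds$, while the centered remainder contributes only $O(1/N)$ per term and vanishes. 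For the \textbf{martingale} piece $G_n:=F_{n/N}(\sigma_{n/N}\xi_n^N)$, oddness gives $\E[G_n\mid\F_{n/N}]=0$, and direct Gaussian integration yields the predictable characteristics
\begin{equation*}
\sum_{n\le Nt}\E[G_n^2\mid\F_{n/N}]\,\tfrac{1}{N}\;\longrightarrow\;\int_0^t\Phi_{\sigma_s}(F_s^2)\,ds, \qquad \sum_{n\le Nt}\E[G_n\,\xi_n^N\mid\F_{n/N}]\,\tfrac{1}{N}\;\longrightarrow\;\int_0^t\frac{\Phi_{\sigma_s}(\mathrm{id}\cdot F_s)}{\sigma_s}\,ds.
\end{equation*}
Finally, the Lindeberg condition follows from the polynomial envelope and $\E[|G_n|^{2+\epsilon}\mid\F_{n/N}]$ being bounded. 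A functional stable CLT for martingale difference arrays (Jacod--Shiryaev Thm.\ VIII.3.32, or Jacod's Thm.\ 2.2.15 in \cite{Jacod}) then delivers the stable-in-law convergence to a conditionally Gaussian martingale whose bracket and cross-bracket with $W$ match the formulas, and rewriting that martingale as $\int\sqrt{\Phi_{\sigma_s}(F_s^2)}\,dW'_s$ on a good extension identifies $W'$ with the stated joint covariation.

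The principal obstacle is the stable (rather than merely weak) nature of the limit: one must verify the joint convergence with $W$ and with every bounded $\F$-martingale, which is exactly what pins down the cross-bracket $[W',W]$ and the independence of the extension from $\F_\infty$ apart from this covariation. The subsidiary obstacle is handling the $t$-dependence of $F_t$: because only Hölder regularity with $\beta>1/2$ is assumed, one cannot differentiate in $t$, so the control of $F_{n/N}$ versus $F_{(n+1)/N}$ inside the conditional-variance computation must be done by summation-by-parts and the bound $|F_t(y)-F_s(y)|\le g(y)|t-s|^\beta$; this is the technical content that the hypothesis $\beta>1/2$ is tailored to.
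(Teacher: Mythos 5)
The paper does not prove this statement at all: it is quoted verbatim as Theorem (10.3.2) of Jacod and Protter's \emph{Discretization of Processes} and used as an imported tool, so there is no internal proof to compare yours against. That said, your sketch is essentially the canonical argument behind such results, and it matches the strategy of the cited reference: localization, the decomposition $\sqrt{N}\Delta_n Y = \sigma_{n/N}\xi_n^N + b_{n/N}/\sqrt{N} + \varepsilon_n^N$, the observation that oddness of $F_t$ annihilates the leading conditional mean so that the $1/\sqrt{N}$ normalization produces a martingale CLT rather than a law of large numbers, the drift arising from the first-order Taylor term $\tfrac{b}{\sqrt N}F'$, and identification of the bracket $\Phi_{\sigma}(F^2)$ and cross-bracket $\Phi_{\sigma}(\mathrm{id}\,F)/\sigma$ by Gaussian integration, followed by a stable functional CLT for martingale arrays (Jacod--Shiryaev VIII.3.32 is the right citation). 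Your Gaussian computations are consistent with the stated form of $U_t$ and of $[W',W]$. Two caveats: first, the claim that $\E[|\rho^N_n|^2\mid\F_{n/N}]=o(1/N)$ \emph{uniformly} is optimistic for merely c\`adl\`ag $\sigma$ --- the contribution $\sqrt N\int_{n/N}^{(n+1)/N}(\sigma_u-\sigma_{n/N})\,dW_u$ is not uniformly small term by term, and controlling the aggregated remainder is where most of the technical work in the reference lives; second, the H\"older hypothesis $\beta>1/2$ is used to replace $F_{n/N}$ by $F_s$ for $s$ in the corresponding interval when identifying the limiting integrals (the error per term is $O(N^{-\beta})$ against $\sqrt{N}$ terms of the normalization), which is the role you correctly ascribe to it in your closing paragraph. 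As a reconstruction of a result the paper takes off the shelf, this is a sound outline.
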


\subsubsection{Continuous time setup}
Let $(\Omega, \F, \FF, \P)$ be a filtered probability space supporting a Wiener process $W_t$. We will fix either an It\^o process
\begin{equation}
p_t = p_0 + \int_0^t \mu_s ds + \int_0^t \sigma_s dW_s
\end{equation}
for the price or 
\begin{equation}
L_t = L_0 + \int_0^t b_s ds + \int_0^t l_s dW_s
\end{equation}
for the inventory.

In addition to one of these processes, we also fix an order book shape process $\gamma_t$ and denote by $c_t$ the associated transaction cost process. 

Assume $L$ (respectively $p$) verifies Assumption \ref{J1} and $c$ (respectively $\gamma$) satisfies Assumption \ref{J3}. 

Just as previously, we define the discretized processes $L^N_n = L_{n/N}$ (respectively $p^N_n = p_{n/N}$) and $c^N_n(\cdot) = \frac{1}{N} c_{n/N}\left(\sqrt{N} \cdot\right)$ (respectively $\gamma^N_n(\cdot) = \frac{1}{N} \gamma_{n/N}\left(\sqrt{N} \cdot\right)$).

\subsubsection{Main result}
The main result is a straightforward application of Theorem \ref{thm_J2}. If we are given the inventory $L$ and transaction costs $c$ then we have:
\begin{theorem}
There exists a very good filtered extension of the original space such that we have the stable convergence in law $p^N_{\lfloor Nt\rfloor} \rightarrow p_t$ with
\begin{equation}
dp_t = -\lambda b_t \Phi_{l_t}\left(c''_t\right) dt + \lambda \sqrt{\Phi_{l_t}((c'_t)^2)} dW'_t
\end{equation}
where 
\begin{equation}
[W',W]_t = -\int_0^t \frac{\Phi_{l_s}\left(id \, c'_s\right)}{l_s \sqrt{\Phi_{l_s}((c'_s)^2)}}ds.
\end{equation}
In particular, 
\begin{equation}
d[p,L]_t = -\Phi_{l_t}\left(id \, c'_t\right)dt
\end{equation}
\end{theorem}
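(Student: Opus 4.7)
The proof is essentially a bookkeeping exercise in applying Theorem \ref{thm_J2} to the discretized relation $\Delta_n p^N = \lambda (c^N_n)'(-\Delta_n L^N)$. First I would unwind the scaling: since $c^N_n(l) = \frac{1}{N} c_{n/N}(\sqrt{N} l)$ gives $(c^N_n)'(l) = \frac{1}{\sqrt{N}} c'_{n/N}(\sqrt{N} l)$, telescoping the discrete relation yields
\begin{equation*}
p^N_{\lfloor Nt \rfloor} - p_0 = \frac{1}{\sqrt{N}} \sum_{n=1}^{\lfloor Nt\rfloor} F_{n/N}\bigl(\sqrt{N} \Delta_n L^N\bigr), \qquad F_t(y) := \lambda c'_t(-y).
\end{equation*}
This is exactly the form of the left-hand side in Theorem \ref{thm_J2} with driving It\^o process $L$. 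I would then verify that Assumption \ref{J1} holds for $L$ (immediate from the c\`adl\`ag and local boundedness hypotheses on $b$ and $l$) and that Assumption \ref{J3} translates cleanly from the stated hypothesis on $c$ to the associated $F$: the required odd symmetry in $y$, the polynomial growth bounds on $|F_t|$ and $|F'_t|$, and the H\"older continuity in $t$ all follow from the corresponding assumptions on $c'_t$ after the change of variable $y \mapsto -y$.

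With the abstract limit theorem available, the limiting process is
\begin{equation*}
U_t = \int_0^t b_s\, \Phi_{l_s}(F'_s)\, ds + \int_0^t \sqrt{\Phi_{l_s}(F_s^2)}\, dW'_s
\end{equation*}
on a very good filtered extension, and I would just read off each ingredient. Because $F'_t(y) = -\lambda c''_t(-y)$ and the Gaussian density $\phi_{l^2_t}$ is even, the change of variable $z = -y$ gives $\Phi_{l_t}(F'_t) = -\lambda\, \Phi_{l_t}(c''_t)$. The same symmetry yields $\Phi_{l_t}(F_t^2) = \lambda^2\, \Phi_{l_t}((c'_t)^2)$. Substituting back reproduces
\begin{equation*}
dp_t = -\lambda b_t\, \Phi_{l_t}(c''_t)\, dt + \lambda \sqrt{\Phi_{l_t}((c'_t)^2)}\, dW'_t.
\end{equation*}

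For the correlation structure, Theorem \ref{thm_J2} supplies $d[W',W]_t = \bigl(l_t \sqrt{\Phi_{l_t}(F_t^2)}\bigr)^{-1} \Phi_{l_t}(\mathrm{id}\cdot F_t)\, dt$. The same change of variable $z = -y$ produces an extra minus sign: $\Phi_{l_t}(\mathrm{id}\cdot F_t) = -\lambda\, \Phi_{l_t}(\mathrm{id}\cdot c'_t)$. Dividing by $\lambda \sqrt{\Phi_{l_t}((c'_t)^2)}$ recovers the claimed expression for $[W',W]_t$. Finally, the quadratic covariation $d[p,L]_t$ is obtained by multiplying the diffusion coefficients and the correlation: the $\lambda \sqrt{\Phi_{l_t}((c'_t)^2)}$ in the diffusion of $p$ and the $l_t$ in the diffusion of $L$ cancel exactly against the denominator in $d[W',W]_t/dt$, leaving $d[p,L]_t = -\Phi_{l_t}(\mathrm{id}\cdot c'_t)\, dt$.

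The only genuine subtlety is verifying that the scaled random function $F_t(y) = \lambda c'_t(-y)$ meets all clauses of Assumption \ref{J3}; oddness is equivalent to $c_t$ being even, and the growth/regularity bounds must be checked uniformly in $\omega$. Everything else is a mechanical substitution into the conclusion of Theorem \ref{thm_J2}, with the pleasant feature that the parity of $\phi_{l^2_t}$ collapses all the awkward $c'_t(-\cdot)$ integrals into integrals of $c'_t$ and $c''_t$ against the Gaussian weight, i.e.\ into the $\Phi_{l_t}(\cdot)$ functionals appearing in the statement.
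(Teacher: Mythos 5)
Your proposal follows exactly the route the paper takes: the paper's entire proof of this theorem is the single sentence that it is a ``straightforward application of Theorem \ref{thm_J2}'', and you have correctly supplied the missing bookkeeping --- the rescaling $(c^N_n)'(l) = N^{-1/2}\, c'_{n/N}(\sqrt{N}\, l)$, the telescoping identification $F_t(y) = \lambda c'_t(-y)$ with $L$ as the driving It\^o process, the observation that oddness of $F$ amounts to evenness of $c_t$, and the parity argument converting $\Phi_{l_t}(F'_t)$, $\Phi_{l_t}(F_t^2)$ and $\Phi_{l_t}(\mathrm{id}\cdot F_t)$ into $-\lambda\,\Phi_{l_t}(c''_t)$, $\lambda^2\,\Phi_{l_t}((c'_t)^2)$ and $-\lambda\,\Phi_{l_t}(\mathrm{id}\cdot c'_t)$ respectively. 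The drift, the diffusion coefficient and the $[W',W]$ formula all check out.

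One arithmetic point in your last step does not close, however. The product of the diffusion coefficient of $p$, namely $\lambda\sqrt{\Phi_{l_t}((c'_t)^2)}$, the diffusion coefficient $l_t$ of $L$, and the correlation density $-\Phi_{l_t}(\mathrm{id}\cdot c'_t)\big/\bigl(l_t\sqrt{\Phi_{l_t}((c'_t)^2)}\bigr)$ does \emph{not} ``cancel exactly'': the denominator carries no $\lambda$, so a factor of $\lambda$ survives and one gets $d[p,L]_t = -\lambda\,\Phi_{l_t}(\mathrm{id}\cdot c'_t)\,dt$. This agrees with the displayed ``in particular'' line only when $\lambda=1$; for general $\lambda\in(0,1]$ that line in the theorem appears to be missing the factor $\lambda$. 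The flat-order-book special case confirms this: there $c_t(l)=l^2/(2m_t)$, so $\Phi_{l_t}(\mathrm{id}\cdot c'_t)=l_t^2/m_t$, while $dp_t=-(\lambda/m_t)\,dL_t$ forces $d[p,L]_t=-(\lambda/m_t)\,l_t^2\,dt = -\lambda\,\Phi_{l_t}(\mathrm{id}\cdot c'_t)\,dt$. So your derivation is right up to that final substitution, but the claimed exact cancellation is an error (one that happens to reproduce what is most plausibly a typo in the paper's statement).
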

A completely equivalent result is obtained if the price $p$ and order book shape function $\gamma$ are given:
\begin{theorem}
There exists a very good filtered extension of the original space such that we have the stable convergence in law $L^N_{\lfloor Nt\rfloor} \rightarrow L_t$ with
\begin{equation}
dL_t =  -\mu_t \Phi_{\sigma_t}\left(\gamma''_t(\lambda^{-1} \cdot)\right) dt + \sqrt{\Phi_{\sigma_t}((\gamma'_t)^2(\lambda^{-1} \cdot))} dW'_t
\end{equation}
where
\begin{equation}
d[p,L]_t = -\Phi_{\sigma_s}\left(id \, \gamma'_t(\lambda^{-1} \cdot)\right)dt.
\end{equation}
\end{theorem}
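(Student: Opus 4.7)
The strategy mirrors the preceding theorem, with the roles of price and inventory swapped: the inventory $L$ will now appear as the limit of a normalized partial sum of functionals of the price increments, putting us directly in the scope of Theorem \ref{thm_J2}. Starting from the microscopic supply--demand closure \eqref{eq:supply_demand_L} and the rescaling $\gamma^N_n(x) = N^{-1}\gamma_{n/N}(\sqrt{N}\, x)$, which gives $(\gamma^N_n)'(x) = N^{-1/2}\gamma'_{n/N}(\sqrt{N}\, x)$, the discrete dynamics read
\begin{equation*}
\Delta_n L^N = -(\gamma^N_n)'\bigl(\lambda^{-1}\Delta_n p^N\bigr) = -\frac{1}{\sqrt{N}}\,\gamma'_{n/N}\bigl(\lambda^{-1}\sqrt{N}\,\Delta_n p^N\bigr).
\end{equation*}
Telescoping yields
\begin{equation*}
L^N_{\lfloor Nt\rfloor} - L_0 = \frac{1}{\sqrt{N}}\sum_{n=0}^{\lfloor Nt\rfloor-1} F_{n/N}\bigl(\sqrt{N}\,\Delta_n p^N\bigr), \qquad F_t(y) := -\gamma'_t(\lambda^{-1} y),
\end{equation*}
which is exactly the functional partial sum appearing in Theorem \ref{thm_J2} with driving It\^o process $p$.

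Next I verify the hypotheses. Assumption \ref{J1} for $p$ is given. For Assumption \ref{J3} applied to the composite function $F$: the oddness requirement is inherited from the evenness of $\gamma_t$ (its minimum is at $0$ and the order-book symmetry built into the standing hypothesis makes $\gamma'_t$ an odd function), and the polynomial growth and temporal H\"older bounds on $\gamma'_t$ and $\gamma''_t$ transfer to $F_t$ and $F'_t$ up to the chain-rule factor $\lambda^{-1}$. Theorem \ref{thm_J2} then delivers stable convergence in law of $L^N_{\lfloor N\cdot\rfloor}$ to a limit process $L$ satisfying
\begin{equation*}
dL_t = \mu_t\,\Phi_{\sigma_t}(F'_t)\, dt + \sqrt{\Phi_{\sigma_t}(F_t^2)}\, dW'_t,
\end{equation*}
and substituting $F'_t(y) = -\lambda^{-1}\gamma''_t(\lambda^{-1} y)$ and $F_t(y)^2 = (\gamma'_t)^2(\lambda^{-1} y)$ identifies the drift and diffusion coefficients with the advertised expressions.

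The bracket $d[p,L]_t$ is then read off from the quadratic-covariation clause of Theorem \ref{thm_J2}: the limit driving Wiener process $W'$ satisfies $d[W,W']_t = \Phi_{\sigma_t}(\mathrm{id}\cdot F_t)/\bigl(\sigma_t\sqrt{\Phi_{\sigma_t}(F_t^2)}\bigr)\, dt$, and multiplying by the diffusion coefficient $\sigma_t$ of $p$ and the diffusion coefficient $\sqrt{\Phi_{\sigma_t}(F_t^2)}$ of $L$ cancels the denominator, leaving $d[p,L]_t = \Phi_{\sigma_t}(\mathrm{id}\cdot F_t)\, dt = -\Phi_{\sigma_t}\bigl(\mathrm{id}\cdot\gamma'_t(\lambda^{-1}\cdot)\bigr)\, dt$ as claimed. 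The main technical hurdle is pure bookkeeping rather than analysis: confirming that the renormalization $\gamma^N_n(x) = N^{-1}\gamma_{n/N}(\sqrt{N} x)$ produces exactly the $1/\sqrt{N}$-scaled partial sum demanded by Theorem \ref{thm_J2}, and checking that the growth and H\"older conditions placed on $\gamma$ by Assumption \ref{J3} propagate through one and two derivatives so that $F$ and $F'$ fit Jacod's framework.
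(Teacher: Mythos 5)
Your argument is correct and is exactly the route the paper takes: the paper's entire proof of this theorem is the remark that it is the ``completely equivalent'' mirror image of the preceding one, i.e.\ a direct application of Theorem \ref{thm_J2} once $\Delta_n L^N = -(\gamma^N_n)'(\lambda^{-1}\Delta_n p^N)$ is telescoped into the normalized partial sum $\frac{1}{\sqrt{N}}\sum_n F_{n/N}(\sqrt{N}\,\Delta_n p^N)$ with $F_t(y)=-\gamma'_t(\lambda^{-1}y)$, which is precisely your reduction, including the bracket computation. One detail you gloss over: the chain rule gives $F'_t(y) = -\lambda^{-1}\gamma''_t(\lambda^{-1}y)$, so Theorem \ref{thm_J2} actually produces the drift $-\lambda^{-1}\mu_t\,\Phi_{\sigma_t}\bigl(\gamma''_t(\lambda^{-1}\cdot)\bigr)$, which differs from the displayed statement by a factor of $\lambda^{-1}$ --- this looks like a typo in the theorem's statement rather than a flaw in your derivation, but it is not, as you claim, an exact match to the ``advertised expression.''
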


\subsection{A special case}
A flat order book corresponds to $\gamma''_t = m_t$ for some adapted process $m$. While quite unrealistic, it is \emph{extremely} tractable and has been proposed and used in other models (\cite{Alfonsi, Wang}).

This corresponds to quadratic transaction costs and \emph{linear} price impact:
\begin{equation}
	\begin{cases}
		dp_t &= -\frac{\lambda}{m_t} dL_t \\
		dX_t &= L_t dp_t + \left(\frac{1}{2} - \lambda\right) \frac{l^2_t}{m_t} dt
	\end{cases}
\end{equation}

Note that the the sign of the effective transaction costs is that of $\frac{1}{2} - \lambda$. Indeed, in the self-financing case $\lambda = \frac{1}{2}$, price recovery and price impact perfectly cancel each other out. If $\lambda > \frac{1}{2}$, then the price impact of trades is stronger than the collected spread because of insufficient price recovery. Also, because of the uniform structure of the order book and perfect fill rate, the inventory of the provider is perfectly anti-correlated to the price.

\subsubsection{The worst case for providers}
As we have seen before, perfect anti-correlation is the worst case for the liquidity provider, making the uniform order book `the worse' shape from the liquidity provider's perspective. Amongst uniform order books, absence of price recovery, $\lambda =1$ is the wost case scenario.

A cute result is that if the liquidity provider provides constant liquidity ($m_t =1$) then we have the following identity between wealth and inventory: 
\begin{equation}
X_t = X_0 - L^2_t + L^2_0
\end{equation}
that is, even with the most naive strategy in the worst case scenario, the liquidity provider does not lose money if she manages her inventory. Symmetrically, one can show that, even in this best case scenario for liquidity takers, there are no round-trip statistical arbitrage opportunities due to price impact only.

\section{\textbf{Conclusions}}
In conclusion, the present paper identifies key features of \emph{high frequency} limit order book markets and derives corresponding \emph{necessary} conditions on self-financing portfolios for continuous-time models of such markets. These features are:
\begin{enumerate}
\item \emph{Non-smoothness} of inventories of high frequency traders and \emph{vanishing} bid-ask spread in high frequency markets.
\item \emph{Adverse selection}  as given by a negative quadratic covariation between price increments and change in provider inventory, which is a consequence of the \emph{price impact} of trades on such time-scales.
\item \emph{Price recovery} and the way it links the bid-ask spread and price volatility processes.
\item Generalized formula for the wealth process of \emph{a self-financing portfolio} when including price impact.
\item Applications to option hedging and portfolio optimization highlighting the differences between trades via market orders and limit orders, and the differences between liquidity providers and liquidity takers.
\end{enumerate}
These features were obtained by studying, both theoretically and empirically, high frequency market microstructure before summarizing it on a macroscopic level. As pointed out by \cite{O'Hara}, the crucial technical tool was the use of an \emph{event-based} clock. We hope further research will follow this method to uncover more effects of HFT on the broader financial system.

\appendix

\section{\textbf{Cross-sectional analysis}}
\label{se:tests}
The main empirical claim of the paper is the \emph{negative covariation} between liquidity provider inventory and the price process. This is one of many ways of identifying price impact, and is due to adverse selection of limit orders by liquidity takers. We wish to test this on a sample of stocks to identify when this relationship is verified, and when not. The data used in this appendix are $29$ large cap stocks using Nasdaq ITCH data on 18/04/13. Other days and stocks have been tested with similar results.

This test will come in three forms, from the most intuitive to the most sophisticated.

We first begin by listing for each of our $29$ stocks the proportion of trades not satisfying the property $\Delta L \Delta p \le 0$.

Then we plot the empirical quadratic covariations with confidence intervals constructed using the functional central limit theorem \cite{Yacine} for continuous It\^o processes.

\begin{figure}
	\centering
		\includegraphics[width=1.00\textwidth]{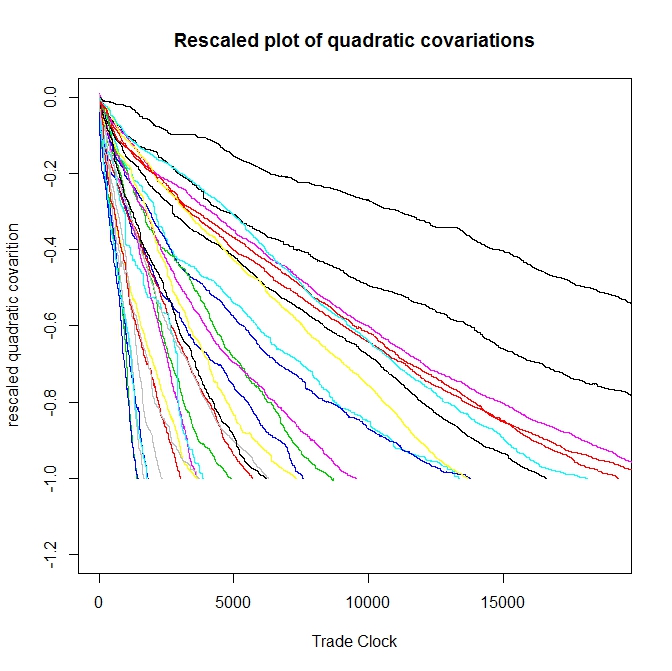}
	\caption{Empirical quadratic covariations (rescaled).}
	\label{fig:cross-section}
\end{figure}

Finally, we set up a rigorous statistical test based on the same functional central limit theorem. In the last case, we assume that we are given two continuous It\^o processes $L$ and $p$ such that:
\begin{equation}
	\begin{cases}
		dp_t &= \mu_t dt +\sigma_t dW_t \\
		dL_t &= b_t dt + l_t dW'_t
	\end{cases}
\end{equation}
with the quadratic covariation between $W_t$ and $W'_t$ being $\rho_t$. Assume furthermore that $\mu_t$ and $b_t$ to be locally bounded and that $\sigma_t$, $l_t$ and $\rho_t$ are c\`adl\`ag. 

If we then denote by $p^N$ and $L^N$ the discrete measurements of these processes on the uniform grid $\{1/N, 2/N, ..., 1\}$ then \cite{Yacine} tells us to consider the discrete processes:
\begin{equation}
	\begin{cases}
		C^N_t &= \sum_{n=1}^{\lfloor Nt\rfloor -1} \Delta_n p^N \Delta_n L^N \\
		V^N_t &= N \sum_{n=1}^{\lfloor Nt\rfloor -2} \left(  \left(\Delta_n p^N \Delta_{n+1} L^N\right)^2 + \Delta_n p^N \Delta_n L^N \Delta_{n+1} p^N \Delta_{n+1} L^N \right)
	\end{cases}
\end{equation}
and we have the functional central limit theorem
\begin{equation}
\mathcal{L}\left( \frac{C^N_t - [p,L]_t}{\sqrt{N^{-1} |V^N_t|}} \right) \rightarrow N(0,1)
\end{equation}

This allows the construction of confidence intervals for the quadratic covariation process. We also use this result to reject the following \emph{null hypothesis}:
\begin{hypothesis}
There exists $t \in [0,1]$ such that $\rho_t >0$.
\end{hypothesis}

by constructing confidence intervals for the quadratic covariation on small time intervals $[t_k,t_{k+1})$, we can compute rejection probabilities for the events $\rho_{t_k} >0$ for each $t_k$. By multiplying these rejection probabilities, we obtain the rejection probability for our overall null hypothesis. Our choice of time intervals $[t_k, t_{k+1}]$ is such that we have $100$ data points in each of these intervals.

Finally, we obtain  the tables:

\begin{table}[htbp]
	\centering
		\begin{tabular}{l|c|c|c|c|c }
		 Stock        & proba reject & nb false & nb trades & percent false  &recovery rejection\\
		\hline
 MSFT             &          0.7868301   &  19   &  27540   & 0.06899056 &      6.147422 \\
 KO               &          0.9876695   &  72   &  20362   & 0.3535998  &    13.932816 \\
 BA               &          0.9999383   & 222   &   4824   & 4.60199 &   24.212272 \\
 GPS              &          0.9999044   &  97   &   7378   & 1.314719  &  22.445107 \\
 GE               &          0.9991448   &   4   &  12969   & 0.03084278  &  6.847097 \\
 CS               &          0.8971721   & 132   &   3621   & 3.645402 &     37.448219 \\
 CPB              &          0.9421457   & 129   &   3578   & 3.605366  &     26.914477 \\
 BCS              &          0.9625842   &  43   &   1613   &  2.66584  &     27.774334 \\
 JNJ              &          0.9550316   & 152   &  16114   & 0.9432791 &     19.777833 \\
 UPS              &          0.9983282   & 237   &   5608   &  4.226106 &     30.117689 \\
 CLX              &          0.9563385   & 118   &   1381   & 8.544533 &  31.643736 \\
 T                &          0.9996831   &  27   &  13287   &  0.2032061  &     12.139685 \\
 DELL             &          0.9893074   &   1   &   3742   & 0.02672368 &      5.130946 \\
 XOM              &          0.9998707   & 340   &  20714   & 1.641402 &     19.276818 \\
 CAT              &          0.9814122   & 397   &  13456   & 2.950357 &     26.575505 \\
 COF              &          0.8973841   & 131   &   6103   & 2.146485  &     27.117811 \\
 AAPL             &          0.9999987   & 2347  &   46710  & 5.02462 &       9.648897 \\
 PG               &          0.9998587   & 189   &  18616   & 1.015256  &     18.038247 \\
 GOOG             &          0.9929220   & 609   &   8595   & 7.085515  &     15.602094 \\
 HSY              &          0.9615380   & 177   &   1807   &  9.795241  &     35.030437 \\
 WFC              &          0.9129410   &  13   &  17672   & 0.0735627 &     11.854912 \\ 
 DTV              &          0.6174753   & 117   &   9334   & 1.253482 &     21.952003 \\
 BBY              &          0.9999374   &  85   &   7181   & 1.183679  &     22.113912 \\
 MT               &          0.8870935   &  18   &   2273   & 0.791905 &     21.293445 \\
 GM               &          0.9774693   &  19   &   5963   & 0.3186316 &     18.279390 \\
 CL               &          0.9833529   & 187   &   3006   & 6.220892 &     24.550898 \\
 MA               &          0.9996761   & 113   &   1435   & 7.874564 &     18.048780 \\
 KSU              &          0.9945635   & 118   &   1756    & 6.719818 &      26.765376 \\
 GIS              &          0.9735843   &  68   &   3624   & 1.87638 &     22.323400 
		\end{tabular}
	\caption{Rejection probability of the null hypothesis, number of trades not satisfying our main inequality, total number of trades, percentage of trades not verifying our main inequality and percentage of trades not verifying our price recovery inequality. We also noted that all the lit trades across all the stocks happened at the best bid and best ask. Note that, of all our proposed relationships, the only weak one is price recovery, which is routinely violated.}
	\label{tab:table}
\end{table}
\newpage

\end{document}